\newcommand{\BE}{\begin{equation}}
\newcommand{\EE}{\end{equation}}
\newcommand{\BS}{\begin{subequations}}
\newcommand{\ES}{\end{subequations}}
\newtheorem{theorem}{Theorem}
\newtheorem{proposition}{Proposition}
\newtheorem{definition}{Definition}
\newtheorem{remark}{Remark}
\newtheorem{lemma}{Lemma}
\newtheorem{corollary}{Corollary}
\begin{document}
 
\title{Algebra of L-banded Matrices}

\author{Shunqi Huang, \IEEEmembership{Graduate Student Member, IEEE}, Lei Liu, \IEEEmembership{Senior Member, IEEE} and Brian~M.~Kurkoski, \IEEEmembership{Member, IEEE}
\thanks{Shunqi Huang and Brian~M.~Kurkoski are with the School of Information Science, Japan Institute of Science and Technology (JAIST), Nomi 923-1292, Japan (e-mail: \{shunqi.huang, kurkoski\}@jaist.ac.jp). Lei~Liu is with the Zhejiang Provincial Key Laboratory of Information Processing, Communication and Networking, College of Information Science and Electronic Engineering, Zhejiang University, Hangzhou 310007, China (e-mail: lei\_liu@zju.edu.cn).  \emph{(Corresponding author: Lei Liu.)}} 
}

\maketitle

\begin{abstract}
Convergence is a crucial issue in iterative algorithms. Damping is commonly employed to ensure the convergence of iterative algorithms. The conventional ways of damping are scalar-wise, and either heuristic or empirical. Recently, an analytically optimized vector damping was proposed for memory message-passing (iterative) algorithms. As a result, it yields a special class of covariance matrices called L-banded matrices. In this paper, we show these matrices have broad algebraic properties arising from their L-banded structure. In particular, compact analytic expressions for the LDL decomposition, the Cholesky decomposition, the determinant after a column substitution, minors, and cofactors are derived. Furthermore, necessary and sufficient conditions for an L-banded matrix to be definite, a recurrence to obtain the characteristic polynomial, and some other properties are given. In addition, we give new derivations of the determinant and the inverse.

\end{abstract}

\begin{IEEEkeywords}
L-banded matrix, iterative variable estimation algorithms, convergence, optimized vector damping 
\end{IEEEkeywords}

\IEEEpeerreviewmaketitle

\section{Introduction}
\IEEEPARstart{V}ARIOUS kinds of iterative algorithms are widely used in the fields of statistical signal processing, compressed sensing, communications, machine learning, coding theory, etc. For example, gradient descent algorithms \cite{ruder2016overview, saad2003iterative} are used for convex optimization, Jacobi and Gauss-Seidel algorithms \cite{bulirsch2002introduction, burden2015numerical} for linear systems, and message passing algorithms \cite{donoho2009message, bayati2011dynamics, ma2017orthogonal, rangan2019vector, takeuchi2021bayes, murphy1999loopy, liu2022memory} for graphical models. 

An iterative algorithm can be generally represented as
\begin{align}
    x_t = f_t(x_1, \cdots\!, x_{t-1}), \label{Eqn:I1}
\end{align}
where $x_t$ is the $t$-th estimate for $x$. A non-memory iterative algorithm $x_t = f_t(x_{t-1})$ can be seen as a special case of (\ref{Eqn:I1}). How to guarantee convergence of iterative algorithms is a crucial problem. Damping is an efficient technique to ensure the convergence of iterative algorithms. \emph{Scalar} damping \cite{murphy1999loopy} 
\begin{align}
    \hat{x}_t = (1-\zeta)\hat{x}_{t-1} + \zeta x_t
\end{align}
is commonly used in the existing literature, where $\zeta$ is a scalar damping factor determined empirically, and $\hat{x}_t$ is the damped estimate initialized with $\hat{x}_1 = x_1$.

Recently, a non-empirical and analytically optimized \emph{vector} damping was proposed for memory approximate message passing (MAMP) algorithms \cite{liu2022memory} as 
\begin{align}
    \hat{x}_t   = [x_1, \cdots\!, x_t]\, \bm{\zeta}_{t} \quad {\rm and} \quad
    \bm{\zeta}_t  =  \tfrac{\bm{V}_{t}^{-1}\bm{1}}{\bm{1}^{\rm T} \bm{V}_{t}^{-1}\bm{1}},
\end{align} 
where $\bm{\zeta}_{t}$ is a damping vector, $\bm{V}_t$ the covariance matrix of $x_1, \cdots\!, x_t$, $\bm{1}$ the all-one vector, $\bm{V}_{t}^{-1}\bm{1}$ the column-wise sum of $\bm{V}_{t}^{-1}$, and $\bm{1}^{\rm T}\bm{V}_{t}^{-1}\bm{1}$  the sum of all the entries in $\bm{V}_{t}^{-1}$. It was found in \cite{liu2022memory} that the covariance matrix $\hat{\bm{V}}_t$ of the damped estimates $\hat{x}_1, \cdots\!, \hat{x}_t$ has a special structure that the entries in each ``L band'' are identical. Meanwhile, this special structure was found independently in \cite{takeuchi2022convergence}, and played an important role in the convergence proof of orthogonal/vector AMP\cite{ma2017orthogonal,rangan2019vector}. This particular type of matrix is referred to as L-banded matrices as follows.

\begin{definition}[L-banded Matrix \cite{liu2022memory, takeuchi2022convergence}]\label{Def:L} 
A matrix $\bm{A} \equiv [a_{i,j}] \in \mathbb{R}^{n \times n}$ is said to be an L-banded matrix if
\BS\label{Eqn:L-banded}
\begin{align}
    a_{i, j} = a_{t, t},\ t = \max(i, j).
\end{align}
That is, $\bm{A}$ can be written as
\begin{align}
\bm{A} =
\begin{tikzpicture}[baseline=-\dimexpr-.0mm\relax]
  \matrix [matrix of math nodes,left delimiter={[},
 right delimiter={]}, row sep=2.5mm,column sep=2.5mm, ampersand replacement=\&] (M) 
 {
   a_1 \& a_2 \& \cdots \& a_n          \\ 
   a_2 \& a_2 \& \cdots \& a_n          \\ 
   \vdots \& \vdots \& \ddots \& \vdots    \\
   a_n \& a_n \& \cdots \& a_n             \\ 
   };
   \draw[](M-1-2.north east)--(M-2-2.south east)--(M-2-1.south west)--(M-2-1.north west)--(M-2-2.north west)--(M-1-2.north west)--(M-1-2.north east);
   \draw[](M-1-4.north east)--(M-4-4.south east)--(M-4-1.south west)--(M-4-1.north west)--(M-4-4.north west)--(M-1-4.north west)--(M-1-4.north east);
\end{tikzpicture}.
\end{align}
\ES
\end{definition}
It is crucial to emphasize that the matrices satisfying (\ref{Eqn:L-banded}) are referred to as \emph{L-matrices} when $a_i \in \mathbb{C}$ and $n \to \infty$ in \cite{bouthat2021matrices, bouthat2021norm, bouthat2022critical, vstampach2022hilbert, vstampach2022asymptotic}. In other words, L-banded matrices can be regarded as L-matrices with real and finite entries.  

Significantly, it was shown that any L-banded covariance matrix converges in \cite{liu2022memory, liu2021sufficient, liu2022sufficient}, i.e., its diagonal entries converge (See Lemma 2). As a result, the convergence difficulties of general iterative algorithms were resolved in principle. The discussions in \cite{liu2022memory, liu2021sufficient, liu2022sufficient, takeuchi2022convergence} are limited to L-banded covariance matrices. Although some properties of L-banded matrices were included in previous works (see Section \ref{Section:PR}), more algebraic properties remain unknown. 

In this paper, we show that the structure of L-banded matrices gives rise to a number of algebraic properties, which are broader than those properties which have been uncovered thus far. This algebra of L-banded matrices, besides being potentially useful for signal processing applications, is also inherently interesting. 
These new properties include an analytic expressions for the LDL decomposition, the Cholesky decomposition, minors, cofactors, and the determinant of the matrix formed by replacing any one column. In addition, necessary and sufficient conditions for an L-banded matrix to be definite, a recurrence to obtain the characteristic polynomial, and some other properties are given. We also provide new derivations of the determinant and the inverse. Finally, a comparison of the time complexity of some operations on L-banded matrices and those of general matrices is given.

\subsection{Notation}
Boldface lowercase and boldface uppercase symbols denote column vectors and matrices, respectively. $\bm{A} \equiv [a_{i,j}]_{n \times n}$ denotes that $\bm{A}$ is an $n \times n$ matrix with $(i,j)$-th entries $a_{i,j}$. $\bm{I}_n$ denotes an $n \times n$ identity matrix. ``iff'' is the abbreviation of ``if and only if''. We call the matrix in (\ref{Eqn:L-banded}) an $n \times n$ L-banded matrix with $[a_1, \cdots\!, a_n]$. In the following paragraphs, we assume that $\bm{A}$ is an $n \times n$ L-banded matrix with $[a_1, \cdots\!, a_n]$ unless otherwise specified.

\subsection{Previous Results}\label{Section:PR}
We summarize the properties of L-banded matrices proposed in previous works \cite{liu2022memory, liu2021sufficient, vstampach2022hilbert, liu2022sufficient, takeuchi2022convergence} as follows.
\begin{lemma}[Convergence of Positive Definite $\bm{A}$\cite{takeuchi2022convergence}]\label{lemma:pos}
    If $\bm{A}$ is positive definite, then $a_1 > a_2 > \cdots > a_n > 0$.
\end{lemma}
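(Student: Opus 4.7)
The plan is to exploit the fact that positive definiteness is inherited by all principal submatrices, not just the leading ones, and then to pick principal submatrices whose structure immediately forces the desired strict monotonicity.

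First I would observe two standard consequences of $\bm{A}$ being positive definite. The diagonal entries must be strictly positive, which in the L-banded setting means $a_i = a_{i,i} > 0$ for every $i \in \{1,\dots,n\}$; in particular $a_n > 0$. Also, for any index set $S \subseteq \{1,\dots,n\}$, the principal submatrix $\bm{A}[S,S]$ is positive definite, because $\bm{y}^{\rm T}\bm{A}\bm{y} = \bm{x}^{\rm T}\bm{A}[S,S]\bm{x}$ when $\bm{y}$ is $\bm{x}$ padded with zeros outside $S$.

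Next, for each $k \in \{1,\dots,n-1\}$ I would take $S = \{k,k+1\}$. Using the defining relation $a_{i,j}=a_{\max(i,j)}$, the resulting principal submatrix is
\begin{equation}
\bm{A}[S,S] \;=\;\begin{pmatrix} a_k & a_{k+1} \\ a_{k+1} & a_{k+1}\end{pmatrix},
\end{equation}
whose determinant is $a_{k+1}(a_k - a_{k+1})$. Positive definiteness of this $2\times 2$ submatrix forces its determinant to be strictly positive, and combined with $a_{k+1} > 0$ this yields $a_k > a_{k+1}$. Since $k$ was arbitrary in $\{1,\dots,n-1\}$, iterating gives the strict chain $a_1 > a_2 > \cdots > a_n$, and combining with $a_n > 0$ from the first step completes the conclusion.

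I do not foresee a real obstacle here: the whole argument rests on choosing the right $2 \times 2$ principal submatrix and reading off the L-banded pattern. The only subtle point worth emphasizing in the write-up is that one uses arbitrary (not merely leading) principal submatrices, so that the index pair $\{k,k+1\}$ is admissible for every $k$.
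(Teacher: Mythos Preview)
Your argument is correct. The $2\times 2$ principal submatrix indexed by $\{k,k+1\}$ is exactly $\begin{pmatrix} a_k & a_{k+1}\\ a_{k+1} & a_{k+1}\end{pmatrix}$, its determinant is $a_{k+1}(a_k-a_{k+1})$, and together with $a_{k+1}>0$ this forces $a_k>a_{k+1}$.

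The paper does not prove Lemma~\ref{lemma:pos} in isolation (it is quoted from \cite{takeuchi2022convergence}); instead it establishes the stronger ``iff'' statement in Theorem~\ref{theorem:def} via a different route. There the key tool is Lemma~\ref{lemma:qua}, which rewrites the full quadratic form as
\[
\bm{x}^{\rm T}\bm{A}\bm{x}=\sum_{k=1}^n \Delta_k\Big(\sum_{i=1}^k x_i\Big)^2,\qquad \Delta_k=a_k-a_{k+1},\ \Delta_n=a_n,
\]
and then reads off that positive definiteness is equivalent to $\Delta_k>0$ for all $k$. Compared to your approach, this is a global decomposition of the quadratic form rather than a local inspection of $2\times 2$ blocks. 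What your argument buys is brevity and elementarity: no algebraic identity is needed beyond a single $2\times 2$ determinant. What the paper's approach buys is the converse for free (sufficiency as well as necessity), and the same identity simultaneously handles the semi-definite and negative (semi-)definite cases in Theorem~\ref{theorem:def}. As a minor remark, your closing comment about needing \emph{arbitrary} principal submatrices is not essential here: the block indexed by $\{k,k+1\}$ is the leading $2\times 2$ block of the trailing principal submatrix on indices $\{k,\dots,n\}$, which is itself L-banded, so one could equally well phrase the argument using only leading minors of nested L-banded matrices.
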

\begin{lemma}[Convergence of Positive Semi-definite $\bm{A}$\cite{liu2022memory}]\label{lemma:semi} 
    If $\bm{A}$ is a covariance (i.e., positive semi-definite) matrix, then $a_1 \geq a_2 \geq \cdots \geq a_n \geq 0$.
\end{lemma}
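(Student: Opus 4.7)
The plan is to reduce the claim to a statement about $2\times 2$ principal submatrices, exploiting the fact that positive semi-definiteness is inherited by all principal submatrices. Since $\bm{A}$ is L-banded, the principal submatrix obtained by keeping rows and columns $i$ and $i+1$ has a very simple form: using $a_{i,j}=a_{\max(i,j)}$, it equals
\begin{equation*}
\bm{A}_{\{i,i+1\}} \;=\; \begin{pmatrix} a_i & a_{i+1} \\ a_{i+1} & a_{i+1} \end{pmatrix}.
\end{equation*}
This matrix must itself be positive semi-definite, which is the only structural fact I will need.

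From positive semi-definiteness of $\bm{A}_{\{i,i+1\}}$ I would extract two consequences. First, each diagonal entry is nonnegative, giving $a_i\geq 0$ for every $i$ (in particular $a_n\geq 0$). Second, the determinant is nonnegative:
\begin{equation*}
\det \bm{A}_{\{i,i+1\}} \;=\; a_i a_{i+1} - a_{i+1}^2 \;=\; a_{i+1}(a_i - a_{i+1}) \;\geq\; 0.
\end{equation*}
If $a_{i+1}>0$ this immediately yields $a_i\geq a_{i+1}$. If $a_{i+1}=0$, the inequality $a_i\geq a_{i+1}$ follows from the already-established nonnegativity $a_i\geq 0$. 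Either way, $a_i\geq a_{i+1}$.

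Iterating this inequality for $i=1,\dots,n-1$ and combining with $a_n\geq 0$ yields the chain $a_1\geq a_2\geq\cdots\geq a_n\geq 0$, which is exactly the claim. There is no real obstacle: the only subtlety is the edge case $a_{i+1}=0$, which is handled in one line by noting that the diagonal entries of a PSD matrix are nonnegative. Compared with the strict inequalities in Lemma \ref{lemma:pos}, the semi-definite case is strictly weaker because the determinant bound is $\geq 0$ rather than $>0$, so equality can occur, and this is reflected in allowing $a_i=a_{i+1}$ and $a_n=0$.
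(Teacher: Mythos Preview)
Your proof is correct. The principal-submatrix argument is clean, and the edge case $a_{i+1}=0$ is handled properly.

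Regarding comparison with the paper: this lemma is listed among the ``Previous Results'' with a citation and no dedicated proof. The paper's own argument appears instead as Theorem~\ref{theorem:def}, where the stronger if-and-only-if statement is obtained from the quadratic-form identity of Lemma~\ref{lemma:qua},
\[
\bm{x}^{\rm T}\bm{A}\bm{x}=\sum_{k=1}^{n}\Delta_k\Big(\sum_{i=1}^{k}x_i\Big)^{2},
\]
with $\Delta_k=a_k-a_{k+1}$ and $\Delta_n=a_n$. Your route is different: you rely only on the inheritance of positive semi-definiteness by $2\times 2$ principal submatrices and the sign of their determinants, rather than on an explicit diagonalization of the quadratic form. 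Your argument is more elementary and perfectly adequate for the one-sided implication in Lemma~\ref{lemma:semi}, but it does not immediately yield the converse; the paper's quadratic-form decomposition delivers both directions at once, since nonnegativity of every $\Delta_k$ visibly makes the sum nonnegative for all $\bm{x}$.
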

\begin{lemma}[Positive Semi-definite $\bm{A}$ \cite{vstampach2022hilbert}]\label{lemma:semi2}
    $\bm{A}$ is positive semi-definite iff $a_1 \geq a_2 \geq \cdots \geq a_n \geq 0$.
\end{lemma}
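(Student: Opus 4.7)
The ``only if'' direction is already given by Lemma~\ref{lemma:semi}, so the whole task is to establish sufficiency: assuming $a_1 \geq a_2 \geq \cdots \geq a_n \geq 0$, show that $\bm{A}$ is positive semi-definite. The plan is to exhibit $\bm{A}$ explicitly as a non-negative combination of rank-one PSD matrices, which immediately yields the conclusion.

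Concretely, I would introduce the first-difference sequence
\[
b_k \;=\; a_k - a_{k+1}\quad(k=1,\dots,n-1), \qquad b_n \;=\; a_n,
\]
which by hypothesis satisfies $b_k \geq 0$ for every $k$, and is nothing more than a telescoping decomposition of the sequence $(a_t)$: $a_t = \sum_{k=t}^{n} b_k$. Then I would introduce the ``staircase'' vectors $\bm{u}_k \in \mathbb{R}^n$ whose first $k$ entries are $1$ and remaining entries are $0$. A direct entrywise check gives
\[
\bigl(\bm{u}_k \bm{u}_k^{\mathrm{T}}\bigr)_{i,j} \;=\; \mathbf{1}[i \leq k]\,\mathbf{1}[j \leq k],
\]
and summing,
\[
\sum_{k=1}^{n} b_k\, \bm{u}_k \bm{u}_k^{\mathrm{T}} \Big|_{i,j} \;=\; \sum_{k \geq \max(i,j)} b_k \;=\; a_{\max(i,j)} \;=\; a_{i,j},
\]
so $\bm{A} = \sum_{k=1}^{n} b_k\, \bm{u}_k \bm{u}_k^{\mathrm{T}}$. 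Each $\bm{u}_k \bm{u}_k^{\mathrm{T}}$ is PSD and each $b_k \geq 0$, so $\bm{A}$ is PSD, completing the nontrivial direction. Equivalently, for any $\bm{x} \in \mathbb{R}^n$,
\[
\bm{x}^{\mathrm{T}} \bm{A} \bm{x} \;=\; \sum_{k=1}^{n} b_k \Bigl(\sum_{i=1}^{k} x_i\Bigr)^{2} \;\geq\; 0.
\]

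There is essentially no obstacle here once the right decomposition is guessed; the only thing to double-check is the indexing of the telescoping sum so that the boundary term $b_n = a_n$ is accounted for, which is exactly where the hypothesis $a_n \geq 0$ is used (all other $b_k \geq 0$ come from the monotonicity $a_k \geq a_{k+1}$). I would note in passing that the same identity also suggests a rank-one update proof of the LDL/Cholesky-type formulas discussed later in the paper, and that strict inequalities $a_1 > \cdots > a_n > 0$ force all $b_k > 0$, so the argument specializes to Lemma~\ref{lemma:pos} by noting that the $\bm{u}_k$ are linearly independent, hence $\bm{A}$ is positive definite in that case.
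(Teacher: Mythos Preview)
Your proof is correct and coincides with the paper's own argument: your $b_k$ are exactly the paper's $\Delta_k$, and the identity $\bm{x}^{\mathrm T}\bm{A}\bm{x}=\sum_{k=1}^{n}b_k\bigl(\sum_{i=1}^{k}x_i\bigr)^{2}$ you derive is precisely Lemma~\ref{lemma:qua}, from which the paper deduces statement~(2) of Theorem~\ref{theorem:def}. The only difference is cosmetic: you obtain the identity via the rank-one decomposition $\bm{A}=\sum_k b_k\,\bm{u}_k\bm{u}_k^{\mathrm T}$ and a one-line telescoping check of entries, whereas the paper reaches it by a longer direct expansion of the quadratic form.
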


\begin{lemma}[Determinant\cite{takeuchi2022convergence, vstampach2022hilbert}]\label{lemma:det}
The determinant of $\bm{A}$ is
\begin{align}
    |\bm{A}| = a_n\prod_{k=1}^{n-1}(a_k - a_{k+1}). 
\end{align}
\end{lemma}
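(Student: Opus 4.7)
The plan is to prove the determinant formula by induction on $n$, exploiting a single elementary row operation that exposes a recursive structure. The base case $n = 1$ is immediate since $|\bm{A}| = a_1$, matching the formula with the empty product convention $a_1 \cdot \prod_{k=1}^{0}(a_k - a_{k+1}) = a_1$.

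For the inductive step, I would subtract row $2$ from row $1$. Using the defining identity $a_{i,j} = a_{\max(i,j)}$, the new first-row entries become $a_{\max(1,j)} - a_{\max(2,j)}$, which equals $a_1 - a_2$ at $j = 1$ and $a_j - a_j = 0$ for every $j \geq 2$. Hence the modified first row is $[a_1 - a_2, 0, \ldots, 0]$, and the determinant is preserved. Expanding along this row yields $|\bm{A}| = (a_1 - a_2)\,M_{1,1}$, where $M_{1,1}$ is the minor obtained by deleting row $1$ and column $1$. The key observation is that this $(n-1)\times(n-1)$ submatrix is itself L-banded with parameters $[a_2, \ldots, a_n]$, since the entry $a_{\max(i,j)}$ is unchanged for all $i,j \geq 2$. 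Applying the induction hypothesis gives $M_{1,1} = a_n \prod_{k=2}^{n-1}(a_k - a_{k+1})$, and multiplying by $(a_1 - a_2)$ yields the claimed formula.

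An equivalent one-shot derivation is to simultaneously replace row $i$ by (row $i$) $-$ (row $i+1$) for $i = 1, \ldots, n-1$. A direct computation using $a_{i,j} = a_{\max(i,j)}$ shows that the new row $i$ equals $a_i - a_{i+1}$ in positions $1,\ldots,i$ and vanishes elsewhere, while row $n$ is unchanged and equals $[a_n, \ldots, a_n]$. The transformed matrix is therefore lower triangular with diagonal $a_1 - a_2, a_2 - a_3, \ldots, a_{n-1} - a_n, a_n$, and its determinant is the stated product.

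No serious obstacle arises; the only point requiring care is verifying that the submatrix obtained after deleting the first row and column remains L-banded with shifted parameters, which is immediate from the $\max(i,j)$ definition. I would favor the inductive presentation since the same ``peel off the first row/column'' recursion is naturally reusable for the upcoming LDL decomposition, cofactor, and characteristic polynomial results.
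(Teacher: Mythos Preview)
Your inductive argument is correct and essentially matches the paper's own derivation: both peel off the factor $(a_1-a_2)$ and recurse on the $(n-1)\times(n-1)$ L-banded submatrix $\bm{A}_{1,1}$ with parameters $[a_2,\ldots,a_n]$. The only cosmetic difference is that the paper reaches $(a_1-a_2)\,|\bm{A}_{1,1}|$ via a short lemma (after deleting row~1, columns~1 and~2 coincide, so in the Laplace expansion along row~1 the minors $|\bm{A}_{1,k}|$ vanish for $k\ge 3$ and $|\bm{A}_{1,1}|=|\bm{A}_{1,2}|$), whereas you obtain it by the row operation $R_1\leftarrow R_1-R_2$; these are two sides of the same observation.

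Your ``one-shot'' simultaneous row reduction to a lower-triangular matrix with diagonal $a_1-a_2,\ldots,a_{n-1}-a_n,a_n$ is a clean alternative that the paper does not present; it gives the result in a single stroke without induction, at the cost of being less directly reusable for the later block/recursive arguments.
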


\begin{corollary}[Invertibility\cite{vstampach2022hilbert, liu2021sufficient, liu2022sufficient}]
$\bm{A}$ is invertible iff $a_n \neq 0$ and $a_k \neq a_{k+1}, \forall k \in \{1, \cdots\!, n\!-\!1\}$.
\end{corollary}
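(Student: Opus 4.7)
The plan is to derive this corollary as a direct consequence of Lemma~\ref{lemma:det}, which gives the compact product formula
\[
|\bm{A}| \;=\; a_n\prod_{k=1}^{n-1}(a_k - a_{k+1}).
\]
Since invertibility is equivalent to $|\bm{A}| \neq 0$, I would simply observe that a product of real numbers is nonzero iff every factor is nonzero. Hence $|\bm{A}| \neq 0$ iff $a_n \neq 0$ and $a_k - a_{k+1} \neq 0$ for every $k \in \{1, \dots, n-1\}$, which is exactly the claim.

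The proof is essentially a one-liner, so there is no real obstacle to overcome beyond quoting the factored determinant. The only thing worth noting explicitly is that both directions follow from the same equivalence: the ``if'' direction uses that all factors being nonzero forces the product to be nonzero, and the ``only if'' direction uses the contrapositive, namely that if any single $a_k = a_{k+1}$ (or $a_n = 0$), then that factor vanishes and the whole determinant collapses to zero, making $\bm{A}$ singular. No ordering or sign assumptions on the $a_k$ are needed here, so the corollary holds without invoking Lemmas~\ref{lemma:pos}--\ref{lemma:semi2}; it is a purely algebraic consequence of the determinant formula.

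If one wanted to make the argument self-contained without citing Lemma~\ref{lemma:det}, an alternative route would be row reduction: subtracting row $k+1$ from row $k$ for $k = 1, \dots, n-1$ yields an upper-triangular matrix whose diagonal entries are $a_1 - a_2, a_2 - a_3, \dots, a_{n-1} - a_n, a_n$, and the invertibility criterion follows from the fact that a triangular matrix is invertible iff its diagonal entries are all nonzero. But given that Lemma~\ref{lemma:det} is already available in the excerpt, the direct deduction above is both shorter and cleaner, and is the approach I would take.
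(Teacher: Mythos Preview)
Your proposal is correct and matches the paper's approach: the corollary is stated immediately after Lemma~\ref{lemma:det} with no separate proof, so it is intended as the immediate consequence of the product formula $|\bm{A}| = a_n\prod_{k=1}^{n-1}(a_k-a_{k+1})$ that you spell out. Your optional row-reduction remark is a fine alternative but unnecessary here.
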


\begin{lemma}[Inverse\cite{liu2021sufficient, liu2022sufficient, vstampach2022hilbert}]\label{lemma:inv}
Suppose that $\bm{A}$ is invertible. Let $\delta_k = (a_k - a_{k+1})^{-1}$ for $k \in \{1, \cdots\!, n\!-\!1\}$, $\delta_0 = 0$ and $\delta_n = a_n^{-1}$. The inverse $\bm{A}^{-1} \equiv [b_{i,j}]_{n \times n}$ is given by 
\BS\label{Eqn:inv_1}
\begin{align}
    b_{i, j} = 
    \begin{cases}
        \delta_{i-1} + \delta_i, & \quad i = j \\[1mm]
        -\delta_{\min(i,j)}, & \quad |i-j| = 1 \\[1mm]
        0, & \quad |i-j| > 1
    \end{cases}.
\end{align}
That is, $\bm{A}^{-1}$ can be written as
\begin{align}
    \bm{A}^{-1} \!\!=\!\!
    \begin{bmatrix} 
        \delta_1 \!\!\! & \!\!\! -\delta_1 \!\!\! & \!\!\! & \!\!\! & \!\!\! \\
        -\delta_1 \!\!\! & \!\!\! \delta_1\!+\!\delta_2 \!\!\! & \!\!\! -\delta_2 \!\!\! & \!\!\! & \!\!\! \text{\Large 0} \\
        \!\!\! & \!\!\! \ddots \!\!\! & \ \ddots \!\!\! & \!\!\! \ddots \!\!\! & \!\!\!  \\
        \text{\Large 0} \!\!\! & \!\!\! & \!\!\! -\delta_{n-2} \!\!\! & \!\!\! \delta_{n-2}\!+\!\delta_{n-1} \!\!\! & \!\!\! -\delta_{n-1}  \\
        \!\!\! & \!\!\! & \!\!\! & \!\!\! -\delta_{n-1} \!\!\! & \!\!\! \delta_{n-1}\!+\!\delta_{n}
    \end{bmatrix}.
\end{align}
\ES
\end{lemma}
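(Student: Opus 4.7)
The plan is to verify the claimed formula directly by checking $\bm{A}\bm{B} = \bm{I}_n$, where $\bm{B}$ denotes the tridiagonal matrix displayed in (\ref{Eqn:inv_1}). Because $\bm{B}$ is tridiagonal, for each column index $j$ only the rows $k\in\{j-1,j,j+1\}$ contribute to the inner product, with $\bm{B}$-entries $-\delta_{j-1}$, $\delta_{j-1}+\delta_j$, and $-\delta_j$ respectively (the first term disappearing when $j=1$ via $\delta_0 = 0$, and the third term being absent when $j = n$). Using $a_{i,k}=a_{\max(i,k)}$, the $(i,j)$-entry of the product thus reduces to the three-term sum
\begin{align}
(\bm{A}\bm{B})_{i,j} \;=\; -a_{\max(i,j-1)}\,\delta_{j-1} \;+\; a_{\max(i,j)}\,(\delta_{j-1}+\delta_j) \;-\; a_{\max(i,j+1)}\,\delta_j.
\end{align}

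I would then split into three cases according to the position of $i$ relative to $j$. For $i<j$ the three $a_{\max}$ values resolve to $a_{j-1},a_j,a_{j+1}$, and regrouping gives $\delta_{j-1}(a_j-a_{j-1})+\delta_j(a_j-a_{j+1})=-1+1=0$ by the defining identity $\delta_k(a_k-a_{k+1})=1$. For $i=j$ the values are $a_j,a_j,a_{j+1}$, producing $\delta_j(a_j-a_{j+1})=1$, which yields the diagonal of $\bm{I}_n$. For $i>j$ all three values collapse to the common value $a_i$, and the coefficient $-\delta_{j-1}+(\delta_{j-1}+\delta_j)-\delta_j$ is identically zero, so the sum vanishes.

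The two boundary columns $j=1$ and $j=n$ are handled by using the conventions $\delta_0=0$ and $\delta_n=a_n^{-1}$ consistently. For $j=1$ the $\delta_{j-1}$ term drops out and the remaining pair behaves as in the bulk analysis. For $j=n$ the $-a_{\max(i,n+1)}\delta_n$ term is absent; the case $i=n$ then yields $a_n\delta_n=1$ directly from the convention $\delta_n=a_n^{-1}$, while the cases $i<n$ reduce, after the $\delta_{n-1}$ telescoping, to $-1+a_n\delta_n = 0$. Invertibility of $\bm{A}$ (Corollary~1) ensures every $\delta_k$ is finite, so the manipulations are valid.

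I do not anticipate a conceptual obstacle: once the tridiagonal structure of the claimed inverse is exploited to cut every inner product to at most three terms, the computation is driven entirely by the single telescoping identity $\delta_k(a_k-a_{k+1})=1$ together with $\delta_n a_n = 1$. The only bookkeeping that requires care is the boundary treatment at $j\in\{1,n\}$, where the general formula must be read in light of $\delta_0=0$ and $\delta_n=a_n^{-1}$ rather than as if $a_0$ or $a_{n+1}$ existed.
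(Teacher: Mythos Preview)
Your argument is correct. Because $\bm{B}$ is tridiagonal, each entry of $\bm{A}\bm{B}$ collapses to at most three terms, and the case split $i<j$, $i=j$, $i>j$ together with the identities $\delta_k(a_k-a_{k+1})=1$ and $\delta_n a_n=1$ disposes of every one of them; the boundary columns $j=1$ and $j=n$ are handled cleanly by the conventions $\delta_0=0$ and $\delta_n=a_n^{-1}$, exactly as you say.

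This is a genuinely different route from the paper's. The paper proceeds by induction on $n$: it writes the $(p{+}1)\times(p{+}1)$ L-banded matrix in block form
\[
\bm{A}=\begin{bmatrix}\widetilde{\bm{A}} & a_{p+1}\bm{1}\\ a_{p+1}\bm{1}^{\rm T} & a_{p+1}\end{bmatrix},
\]
applies the Schur-complement block inverse formula, and uses the inductive hypothesis for $\widetilde{\bm{A}}^{-1}$ together with the identity $\bm{1}^{\rm T}\widetilde{\bm{A}}^{-1}=[0,\dots,0,a_p^{-1}]$ to recover the claimed tridiagonal pattern one row and column at a time. Your direct verification is shorter and more elementary---it needs nothing beyond the definition of matrix product and a three-way case split---but it presupposes the answer. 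The paper's inductive argument is longer and relies on the block inversion identity, yet it is constructive: it explains \emph{how} appending a new L-band updates the inverse by modifying only the last $2\times2$ corner, which is structurally informative and dovetails with the recursive viewpoint used elsewhere in the paper (e.g., the determinant recursion in Section~\ref{section:det}).
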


\begin{section}{Main Results}
In this section, we give our main results. Necessary and sufficient conditions for the definiteness of $\bm{A}$ are given in Section \ref{section:def}. The LDL decomposition and the Cholesky decomposition are given in Section \ref{section:LDL}. The minors, cofactors, and determinant after a column substitution are given in Section \ref{section:minor}. The characteristic polynomial is given in Section \ref{section:poly}. Some other properties are given in Section \ref{section:other}. The new proofs of Lemma \ref{lemma:det} and Lemma \ref{lemma:inv} are given in Section \ref{section:det} and \ref{section:inv}, respectively.

\subsection{Definiteness} \label{section:def}
\begin{lemma}\label{lemma:qua}
For $\bm{x} = [x_1, \cdots\!, x_n]^{\rm T} \in \mathbb{R}^{n}$, 
\BS
\begin{align}
    \bm{x}^{\rm T} \bm{A} \bm{x} &= \sum_{k=1}^n \Delta_k \Big(\sum_{i=1}^k x_i\Big)^2,
\end{align}
where $\Delta_k = a_k - a_{k+1}$ for $k \in \{1, \cdots\!, n\!-\!1\}$ and $\Delta_n = a_n$.
\ES
\end{lemma}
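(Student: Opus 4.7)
The plan is to recognize that the L-banded structure encodes entries $a_{\max(i,j)}$ in such a way that $a_t$ telescopes as a sum of the increments $\Delta_k$. Specifically, from the definition $\Delta_k=a_k-a_{k+1}$ for $k<n$ and $\Delta_n=a_n$, one immediately has the telescoping identity
\begin{equation*}
a_t \;=\; \sum_{k=t}^{n}\Delta_k,\qquad t\in\{1,\dots,n\}.
\end{equation*}
This is the only algebraic input the proof really needs.

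The main computation then just reorders a double sum. I would begin from the definition of the quadratic form,
\begin{equation*}
\bm{x}^{\mathrm T}\bm{A}\bm{x} \;=\; \sum_{i=1}^n\sum_{j=1}^n a_{\max(i,j)}\, x_i x_j,
\end{equation*}
substitute the telescoping identity for $a_{\max(i,j)}$, and swap the order of summation so that the outer index is $k$:
\begin{equation*}
\bm{x}^{\mathrm T}\bm{A}\bm{x} \;=\; \sum_{k=1}^{n}\Delta_k \sum_{\max(i,j)\le k} x_i x_j.
\end{equation*}
The condition $\max(i,j)\le k$ is equivalent to $i\le k$ and $j\le k$, so the inner sum factors as $\bigl(\sum_{i=1}^{k}x_i\bigr)\bigl(\sum_{j=1}^{k}x_j\bigr)=\bigl(\sum_{i=1}^{k}x_i\bigr)^2$, which yields the claimed expression.

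There is no real obstacle here; the only point requiring a line of care is the telescoping identity (checking the boundary index $k=n$, where $\Delta_n=a_n$ is tacked on to close the sum). As a sanity check I would also verify the formula by induction on $n$: the $n=1$ case gives $a_1 x_1^2$, and passing from $n{-}1$ to $n$ amounts to separating the terms involving $x_n$ in the double sum and recognizing the new $\Delta_{n-1}$ and $\Delta_n$ contributions, but the direct reordering argument above is cleaner and avoids bookkeeping. If desired, one can phrase the same manipulation via the substitution $s_k=\sum_{i=1}^{k}x_i$ (with $s_0=0$ and $x_i=s_i-s_{i-1}$), which makes the identity $\bm{x}^{\mathrm T}\bm{A}\bm{x}=\sum_k \Delta_k s_k^2$ look like a discrete integration-by-parts; this alternative viewpoint will also be useful for the LDL decomposition in the next subsection.
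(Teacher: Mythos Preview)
Your argument is correct and is genuinely different from the paper's. The paper expands $\bm{x}^{\mathrm T}\bm{A}\bm{x}$ row by row, splits it into the diagonal sum $(\star)=\sum_k a_k x_k^2$ and the off-diagonal sum $(\#)=2\sum_{k\ge 2} a_k\sum_{i<k} x_i x_k$, and then applies Abel-summation manipulations to each piece separately to produce the $\Delta_k$'s before recombining into the square $(\sum_{i\le k} x_i)^2$. Your route bypasses this bookkeeping entirely: you first telescope the \emph{entries} via $a_{\max(i,j)}=\sum_{k\ge \max(i,j)}\Delta_k$, then swap the order of summation so the set $\{\max(i,j)\le k\}$ factors as a product and the square appears immediately. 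The payoff is brevity and clarity---you never split diagonal from off-diagonal terms---and, as you note, the partial-sum substitution $s_k=\sum_{i\le k}x_i$ makes the link to the LDL factorization in the next subsection transparent (it is precisely the change of variables $\bm{s}=\bm{L}^{\mathrm T}\bm{x}$ for a unit upper-triangular $\bm{L}^{\mathrm T}$). The paper's approach, on the other hand, is closer to a bare-hands computation and does not rely on recognizing the telescoping identity for $a_t$ in advance.
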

\begin{proof}
With $\bm{x} = [x_1, \cdots\!, x_n]^{\rm T}$,
\BS
\begin{align}
    &\bm{x}^{\rm T} \bm{A} \bm{x}
    = \sum_{k=1}^n x_k \Big(a_k \sum_{i=1}^{k} x_i + \sum_{i=k+1}^n a_i x_i\Big) \\
    &= \sum_{k=1}^n a_k x_k^2 + \sum_{k=1}^n \Big(a_k \sum_{i=1}^{k-1} x_i x_k + \sum_{i=k+1}^{n} a_i x_k x_i\Big) \\
    &= \underbrace{\sum_{k=1}^n a_k x_k^2}_{(\star)} + \underbrace{2 \sum_{k=2}^n a_k \sum_{i=1}^{k-1} x_i x_k}_{(\#)}. \label{Eqn:def1}
\end{align}
\ES
$(\star)$ can be expanded as
\BS
\begin{align}
    (\star) &= a_n \sum_{k=1}^n x_k^2 + \sum_{k=1}^{n-1}(a_k - a_n) x_k^2 \\
        &= a_n \sum_{k=1}^n x_k^2 + \sum_{k=1}^{n-1}(a_k - a_{k+1}) \sum_{i=1}^k x_i^2 \\
        &= \sum_{k=1}^n \Delta_k \sum_{i=1}^k x_i^2.
\end{align}
\ES
$(\#)$ can be expanded as
\BS
\begin{align}
    &(\#) = 2 \sum_{k=2}^n a_k \Big(\sum_{p=2}^k\sum_{i=1}^{p-1}x_i x_p - \sum_{p=2}^{k-1}\sum_{i=1}^{p-1}x_i x_p \Big)\\
    &= 2 \Big(a_n\sum_{p=2}^n\sum_{i=1}^{p-1}x_i x_p + \sum_{k=2}^{n-1}(a_k - a_{k+1})\sum_{p=2}^k\sum_{i=1}^{p-1} x_i x_p \Big) \\
    &= 2 \sum_{k=2}^{n}\Delta_k\sum_{p=2}^k\sum_{i=1}^{p-1} x_i x_p.
\end{align}
\ES
Then, (\ref{Eqn:def1}) can be rewritten as
\begin{align}
    \bm{x}^{\rm T} \bm{A} \bm{x} = \sum_{k=1}^n \Delta_k \Big(\sum_{i=1}^k x_i\Big)^2.
\end{align}
Thus, we finish the proof.
\end{proof}

Lemma \ref{lemma:qua} not only shows a simpler expression for the quadratic form, but also is the key to proving Theorem \ref{theorem:def}.

\begin{theorem}[Definiteness]\label{theorem:def}
The following statements hold:
\begin{enumerate}
\item{$\bm{A}$ is positive definite iff $a_1 > a_2 > \cdots > a_n > 0$.}
\item{$\bm{A}$ is positive semi-definite iff $a_1 \geq a_2 \geq \cdots \geq a_n \geq 0$.}
\item{$\bm{A}$ is negative definite iff $a_1 < a_2 < \cdots < a_n < 0$.}
\item{$\bm{A}$ is negative semi-definite iff $a_1 \leq a_2 \leq \cdots \leq a_n \leq 0$.}
\end{enumerate}
\end{theorem}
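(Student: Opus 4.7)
The plan is to leverage Lemma \ref{lemma:qua} directly. Writing $s_k := \sum_{i=1}^k x_i$ with the convention $s_0 := 0$, the identity $\bm{x}^{\rm T}\bm{A}\bm{x} = \sum_{k=1}^n \Delta_k\, s_k^2$ converts $\bm{A}$'s quadratic form into a diagonal quadratic form in the new variables $s_1,\ldots,s_n$, with coefficients exactly $\Delta_1,\ldots,\Delta_n$. Crucially, the linear map $\bm{x} \mapsto (s_1,\ldots,s_n)$ is a bijection on $\mathbb{R}^n$ (its inverse is $x_k = s_k - s_{k-1}$), so $\bm{x} = \bm{0}$ iff $s_k = 0$ for every $k$. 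This reduces the definiteness of $\bm{A}$ to the signs of the scalars $\Delta_k$.

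For part (1), sufficiency is then immediate: if every $\Delta_k > 0$, then for any $\bm{x} \neq \bm{0}$ at least one $s_k$ is nonzero and the whole sum is strictly positive. For necessity, I plan to construct test vectors that isolate a single $\Delta_m$. Specifically, for $m < n$ take $\bm{x}^{(m)}$ with $x_m = 1$, $x_{m+1} = -1$, and zeros elsewhere, so that $s_m = 1$ and all other $s_k$ vanish; for $m = n$ take $\bm{x}^{(n)} = \bm{e}_n$. Plugging in gives $\bm{x}^{(m){\rm T}}\bm{A}\bm{x}^{(m)} = \Delta_m$, so positive definiteness forces $\Delta_m > 0$ for every $m$, which is exactly $a_1 > a_2 > \cdots > a_n > 0$.

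Part (2) is obtained by the same argument with all strict inequalities replaced by weak ones (this also gives an alternative proof of Lemma \ref{lemma:semi2}). Parts (3) and (4) will then follow immediately by applying parts (1) and (2) to $-\bm{A}$, which is itself an L-banded matrix with parameters $[-a_1,\ldots,-a_n]$, so the monotonicity conditions simply flip sign. I do not anticipate a real obstacle; the only points requiring care are checking that the isolating test vectors $\bm{x}^{(m)}$ indeed produce the claimed single-term value, and noting that bijectivity of $\bm{x} \mapsto (s_1,\ldots,s_n)$ lets us freely interchange ``$\bm{x}\neq\bm{0}$'' with ``some $s_k \neq 0$''. Everything else is a direct consequence of Lemma \ref{lemma:qua}.
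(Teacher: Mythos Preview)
Your proposal is correct and follows essentially the same approach as the paper's proof, which also invokes Lemma~\ref{lemma:qua} and reads off the equivalence between positive definiteness and $\Delta_k>0$ for all $k$ from the diagonalized form $\sum_k \Delta_k s_k^2$. You are slightly more explicit than the paper in recording the bijectivity of $\bm{x}\mapsto(s_1,\ldots,s_n)$ and in exhibiting the isolating test vectors $\bm{x}^{(m)}$ for the necessity direction, whereas the paper simply asserts the equivalence after noting the $s_k$ are not all zero and omits the remaining cases as similar.
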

\begin{proof}
$\bm{A}$ is positive definite iff 
\begin{align}
    \bm{x}^{\rm T} \bm{A} \bm{x} > 0,\ \forall \bm{x} \in \mathbb{R}^{n}\backslash\{\bm{0}\}. \label{Eqn:the3_1}
\end{align}
From Lemma \ref{lemma:qua}, (\ref{Eqn:the3_1}) is equivalent to: For any real $x_1, \cdots\!, x_n$ that are not all zeros, 
\begin{align}
    \sum_{k=1}^n \Delta_k \Big(\sum_{i=1}^k x_i\Big)^2 > 0. \label{Eqn:the3_2}
\end{align}
Note that $(\sum_{i=1}^1 x_i)^2, \cdots\!, (\sum_{i=1}^n x_i)^2$ are not all zeros since $x_1, \cdots, x_n$ are not all zeros. Thus, (\ref{Eqn:the3_2}) is equivalent to
\begin{align}
    \Delta_k > 0,\ \forall k \in \{1, \cdots\!, n\}. \label{Eqn:the3_3}
\end{align}
(\ref{Eqn:the3_3}) is equivalent to 
\begin{align}
    a_1 > a_2 > \cdots > a_n > 0.
\end{align}
The proofs for the other cases are omitted since they are similar.
\end{proof}

Compared to Lemma \ref{lemma:pos} and Lemma \ref{lemma:semi}, Theorem \ref{theorem:def} shows that the necessary conditions for $\bm{A}$ to be positive definite and positive semi-definite are also sufficient conditions. Though the second statement was proved in Lemma \ref{lemma:semi2}, Theorem \ref{theorem:def} gives a different proof. In addition, necessary and sufficient conditions for $\bm{A}$ to be negative definite and negative semi-definite are listed in Theorem \ref{theorem:def}.

\subsection{LDL decomposition and Cholesky decomposition}\label{section:LDL}

The LDL decomposition of a real symmetric matrix has the form $\bm{L}\bm{D}\bm{L}^{\rm T}$, where $\bm{L}$ is a unit lower triangular matrix and $\bm{D}$ is a diagonal matrix.

\begin{theorem}[LDL decomposition]\label{theorem:LDL}
Suppose that $a_k \neq 0, \forall k \in \{1, \cdots\!, n\!-\!1\}$. The LDL decomposition of $\bm{A}$ can be
\BS
\begin{align}
    \bm{A} = \bm{L}\bm{D}\bm{L}^{\rm T},
\end{align}
where $\bm{L} \equiv [l_{i,j}]_{n \times n}$ is a unit lower triangular matrix as
\begin{align}
    l_{i, j} = 
    \begin{dcases}
        \frac{a_i}{a_j}, & \quad i > j \\
        1, & \quad i = j
    \end{dcases},
\end{align}
and $\bm{D}$ is a diagonal matrix with diagonal entries 
\begin{align}
    d_{k} = 
    \begin{dcases}
        a_1, & \quad k = 1 \\[1mm]
        \frac{a_k}{a_{k-1}}(a_{k-1}-a_k), & \quad 1 < k \leq n
    \end{dcases} \label{Eqn:d}.
\end{align}
That is, $\bm{L}$ and $\bm{D}$ can be written as
\begin{align}
    \bm{L} &= 
    \begin{bmatrix}
        1 &  &  &  & \\
        \frac{a_2}{a_1} & 1 &  & \text{\Large{0}} & \\
        \frac{a_3}{a_1} & \frac{a_3}{a_2} & \ddots &  & \\
        \vdots & \vdots & \vdots & 1 & \\[1mm]
        \frac{a_n}{a_1} & \frac{a_n}{a_2} & \cdots & \frac{a_n}{a_{n-1}} & 1
    \end{bmatrix}, \\[1mm]
    \bm{D} &= 
    \begin{bmatrix}
        a_1 &  &  & \qquad \text{\Large{0}} \\
        & \frac{a_2}{a_1}(a_1\!-\! a_2) &  & \\
        &  & \ddots & \\ 
        \text{\Large{0}} &  &  &  \frac{a_n}{a_{n-1}}(a_{n-1}\!-\!a_n)
    \end{bmatrix}.
\end{align}
\ES
\end{theorem}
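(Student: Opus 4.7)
The plan is to verify the claimed factorization by direct entry-wise computation of $\bm{L}\bm{D}\bm{L}^{\rm T}$. Since both $\bm{A}$ and the claimed product are symmetric, it suffices to show $(\bm{L}\bm{D}\bm{L}^{\rm T})_{i,j} = a_j$ for every $i \leq j$. Because $\bm{L}$ is unit lower triangular, the sum $(\bm{L}\bm{D}\bm{L}^{\rm T})_{i,j} = \sum_{k} l_{i,k}\, d_k\, l_{j,k}$ is supported on $k \leq i$, which keeps the calculation compact.

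The core of the argument is a telescoping identity obtained from the explicit formula for $d_k$. Using $d_1 = a_1$ and $d_k = \frac{a_k}{a_{k-1}}(a_{k-1}-a_k)$ for $k \geq 2$, one checks $\frac{d_1}{a_1^2} = \frac{1}{a_1}$ and $\frac{d_k}{a_k^2} = \frac{1}{a_k} - \frac{1}{a_{k-1}}$, so that $\sum_{k=1}^{i-1} \frac{d_k}{a_k^2} = \frac{1}{a_{i-1}}$ for $i \geq 2$. This is the only non-trivial algebraic identity needed; the hypothesis $a_k \neq 0$ for $k \leq n-1$ is exactly what makes the denominators appearing in $l_{i,j}$ and $d_k$ well defined.

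With this identity in hand, I split off the diagonal term $k=i$ from the sum and treat the two sub-cases. For $i<j$, using $l_{i,i}=1$ and $l_{j,i}=a_j/a_i$, one obtains $(\bm{L}\bm{D}\bm{L}^{\rm T})_{i,j} = (a_j/a_i)\, d_i + a_i a_j / a_{i-1}$, and substituting the formula for $d_i$ collapses this to $a_j$. For $i=j$, the analogous computation yields $(\bm{L}\bm{D}\bm{L}^{\rm T})_{i,i} = d_i + a_i^2/a_{i-1} = a_i$. The boundary case $i=1$ is handled directly: the telescoping sum is empty and only the $k=1$ term contributes, producing $d_1\, l_{j,1} = a_1 \cdot (a_j/a_1) = a_j$ (and $a_1$ when $j=1$).

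The main obstacle is not conceptual but notational, namely organising the two sub-cases together with the $i=1$ boundary without obscuring the telescoping step. A natural alternative would be induction on $n$, exploiting the fact that the leading principal $(n-1)\times(n-1)$ submatrix of $\bm{A}$ is itself L-banded with $[a_1,\ldots,a_{n-1}]$, and determining $d_n$ and the last row of $\bm{L}$ by solving a lower-triangular system. However, that route still reduces to exactly the same telescoping identity, so direct verification appears to be the cleaner presentation.
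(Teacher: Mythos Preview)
Your proposal is correct and follows essentially the same approach as the paper: direct entry-wise computation of $(\bm{L}\bm{D}\bm{L}^{\rm T})_{i,j}=\sum_{k\le \min(i,j)} l_{i,k}\,d_k\,l_{j,k}$ collapsed by the telescoping identity $d_k/a_k^2 = 1/a_k - 1/a_{k-1}$. The only cosmetic difference is that the paper observes $l_{i,i}=a_i/a_i$ and therefore writes the whole sum as $a_i a_j\sum_{k=1}^{m} d_k/a_k^2 = a_i a_j/a_m = a_{\max(i,j)}$ without splitting off the $k=i$ term, which removes your sub-cases $i<j$, $i=j$, and $i=1$.
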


\begin{proof}
    \BS
    For $i,j \in \{1, \cdots\!, n\}$, let $\bm{b}_i^{\rm T}$ denote the $i$-th row of $\bm{L}\bm{D}$ and $\bm{l}_j$ denote the $j$-th column of $\bm{L}^{\rm T}$. We have
    \begin{align}
        \bm{b}_i^{\rm T} &= \big[d_{1}\tfrac{a_i}{a_1}, \cdots\!, d_i\tfrac{a_i}{a_i}, \underbrace{0, \cdots\!, 0}_{n-i}\big], \\
        \bm{l}_j &= \big[\tfrac{a_j}{a_1}, \cdots\!, \tfrac{a_j}{a_j}, \underbrace{0, \cdots\!, 0}_{n-j}\big]^{\rm T}.
    \end{align}
    Let $m = \min(i, j)$, $t = \max(i, j)$, and $\bm{L}\bm{D}\bm{L}^{\rm T} \equiv [c_{i, j}]_{n \times n}$.
    \begin{align}
        c_{i, j} &= \bm{b}_i^{\rm T} \bm{l}_j = \sum_{k=1}^{m} d_k \frac{a_i a_j}{a_k^2} \\
        &= a_i a_j \Big(\frac{1}{a_1} + \sum_{k=2}^m \frac{a_{k-1}-a_k}{a_{k-1}a_k}\Big)  \\
        &= a_i a_j \frac{1}{a_m} \\
        &= a_t.
    \end{align}
    \ES
    Thus, we have proved that $\bm{A} = \bm{L}\bm{D}\bm{L}^{\rm T}$.
\end{proof}

\begin{proposition}
Let $p \in \{1, \cdots\!, n\}$ such that $a_p \neq 0$ and $a_{p+1} = \cdots = a_n = 0$. The LDL decomposition of $\bm{A}$ exists iff $a_k \neq 0,\forall k \in \{1, \cdots\!, p\!-\!1\}$.
\end{proposition}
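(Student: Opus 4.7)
My plan is to establish the biconditional by treating sufficiency and necessity separately. For sufficiency, the key observation is that since $a_{i,j} = a_{\max(i,j)}$ and $a_{p+1} = \cdots = a_n = 0$, the matrix $\bm{A}$ is block diagonal with an L-banded $p \times p$ top-left block $\bm{A}_p$ carrying the entries $[a_1,\ldots,a_p]$ and an $(n-p) \times (n-p)$ zero bottom-right block. The hypothesis that $a_k \neq 0$ for $k \in \{1,\ldots,p-1\}$ is exactly the condition needed to apply Theorem \ref{theorem:LDL} to $\bm{A}_p$, which yields $\bm{A}_p = \bm{L}_p \bm{D}_p \bm{L}_p^{\rm T}$. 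Padding $\bm{L}_p$ with $\bm{I}_{n-p}$ in the lower right and $\bm{D}_p$ with a zero block of the same size then produces a valid LDL decomposition of $\bm{A}$.

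For necessity, I would argue by contradiction: assume $\bm{A} = \bm{L}\bm{D}\bm{L}^{\rm T}$ and that $a_m = 0$ for some $m \in \{1,\ldots,p-1\}$, and I would extract an inconsistency in the $m$-th row. Since $a_m = 0$, the L-banded structure forces $A_{m,j} = a_{\max(m,j)} = a_m = 0$ for every $j \le m$, while the decomposition gives $A_{m,j} = \sum_{k=1}^{\min(m,j)} l_{m,k} d_k l_{j,k}$. Introducing the shorthand $v_k := l_{m,k} d_k$, the equations $\sum_{k=1}^{j} v_k l_{j,k} = 0$ for $j = 1, \ldots, m$, combined with the unit diagonal $l_{j,j} = 1$, let me peel off one $v_k$ at a time: $v_1 = 0$ from $j=1$, then $v_2 = 0$ from $j=2$, and so on by straightforward induction, giving $v_k = 0$ for all $1 \le k \le m$.

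Substituting this back into the LDL formula for entries to the right of the diagonal, $A_{m,j} = \sum_{k=1}^{m} v_k l_{j,k} = 0$ must hold for every $j > m$ as well, so the entire $m$-th row of $\bm{A}$ vanishes. But $m < p$ gives $A_{m,p} = a_p \neq 0$, which is the desired contradiction.

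The step I expect to require the most care is the induction on the $v_k$: it must go through without assuming invertibility of $\bm{D}$ or uniqueness of the LDL decomposition, since the hypothesis permits consecutive equal entries $a_i = a_{i+1}$, which can produce zero entries in $\bm{D}$ (e.g., via Theorem \ref{theorem:LDL} itself). Exploiting only the unit-triangular structure of $\bm{L}$, rather than any pivot-nonvanishing condition on $\bm{D}$, is what keeps the argument valid in full generality.
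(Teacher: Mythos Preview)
Your proposal is correct and follows essentially the same route as the paper. Sufficiency is handled identically (block-diagonal structure, apply Theorem~\ref{theorem:LDL} to the $p\times p$ block, pad with $\bm{I}_{n-p}$ and zeros), and necessity also proceeds by showing that the $m$-th row of $\bm{L}\bm{D}$ vanishes via the unit-triangular peel-off you describe. The only cosmetic difference is the choice of contradiction: the paper first selects $m$ with $a_m=0$ and $a_{m+1}\neq 0$ and then shows $a_{m+1}=(\bm{L}\bm{D}\bm{L}^{\rm T})_{m+1,m}=0$, whereas you keep $m$ arbitrary and instead read off $A_{m,p}=a_p\neq 0$ from the vanishing row; your version is arguably a bit more direct since it avoids the preliminary choice of $m$.
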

\begin{proof}
First, we want to show ``the LDL decomposition of $\bm{A}$ exists if $a_k \neq 0, \forall k \in \{1, \cdots\!, p\!-\!1\}$''. Let $\Tilde{\bm{A}}$ be an L-banded matrix with $[a_1, \cdots\!, a_p]$. Since $a_1, \cdots\!, a_{p-1}$ are non-zero, we can let $\Tilde{\bm{A}} = \Tilde{\bm{L}}\Tilde{\bm{D}}\Tilde{\bm{L}}^{\rm T}$ by Theorem \ref{theorem:LDL}. Then, the LDL decomposition of $\bm{A}$ can be $\bm{A} = \bm{L}\bm{D}\bm{L}^{\rm T}$, where
\begin{align}
    \bm{L} = 
    \begin{bmatrix}
        \Tilde{\bm{L}} & \bm{0} \\
        \bm{0} & \bm{I}_{n-p} 
    \end{bmatrix},\ 
    \bm{D} = 
    \begin{bmatrix}
        \Tilde{\bm{D}} & \bm{0} \\
        \bm{0} & \bm{0} 
    \end{bmatrix}. 
\end{align}

Second, we want to show ``... only if ...''. We consider showing the contrapositive, i.e., the LDL decomposition of $\bm{A}$ does not exist if $\exists k \in \{1, \cdots\!, p\!-\!1\}, a_k = 0$, and use proof by contradiction. Assume that there exists an LDL decomposition $\bm{A}=\bm{L}\bm{D}\bm{L}^{\rm T}$ if $\exists k \in \{1, \cdots\!, p\!-\!1\}, a_k = 0$. We can always find $m \in \{1, \cdots\!, p\!-\!1\}$ such that $a_m = 0$ and $a_{m+1} \neq 0$. For $i,j \in \{1, \cdots\!, n\}$, the $i$-th row of $\bm{L}\bm{D}$ and the $j$-th column of $\bm{L}^{\rm T}$ are 
\BS
\begin{align}
    \bm{b}_i^{\rm T} &= \big[d_{1}l_{i, 1}, \cdots\!, d_{i-1}l_{i, i-1}, d_i, \underbrace{0, \cdots\!, 0}_{n-i}\big], \\
    \bm{l}_j &= \big[l_{j, 1}, \cdots\!, l_{j, j-1}, 1, \underbrace{0, \cdots\!, 0}_{n-j}\big]^{\rm T}.
\end{align}
\ES
Since $\forall j \leq m$, $a_m = \bm{b}_m^{\rm T}\bm{l}_j = 0$, we can get
\begin{align}
    l_{m, 1} d_1 = \cdots = l_{m, m-1} d_{m-1} = d_m = 0.
\end{align}
Then, we can obtain $a_{m+1} = \bm{b}_{m+1}^{\rm T}\bm{l}_m = 0$, which leads to a contradiction. Thus, we finish the proof.
\end{proof}

\begin{proposition}\label{proposition:unLDL}
There exists a unique LDL decomposition of $\bm{A}$ if $a_i \neq 0, \forall i \in \{1, \cdots\!, n\!-\!1\}$ and $a_{j-1} \neq a_j, \forall j \in \{2, \cdots\!, n\!-\!1\}$.
\end{proposition}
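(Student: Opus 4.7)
The plan is to split the argument into existence and uniqueness. Existence is immediate: the hypothesis $a_i \neq 0$ for $i \in \{1,\ldots,n-1\}$ coincides with that of Theorem \ref{theorem:LDL}, which already exhibits an explicit decomposition $\bm{A} = \bm{L}\bm{D}\bm{L}^{\rm T}$. So the substantive point is uniqueness.

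For uniqueness, I would appeal to the standard linear-algebra fact that a symmetric matrix has at most one LDL decomposition whenever its first $n-1$ leading principal submatrices are invertible. Every leading principal submatrix of an L-banded matrix is itself L-banded, so Lemma \ref{lemma:det} applied to the $k \times k$ submatrix yields the leading principal minor $a_k \prod_{i=1}^{k-1}(a_i - a_{i+1})$. The two hypotheses of the proposition say exactly that none of the factors $a_1,\ldots,a_{n-1}$ and none of the differences $a_1-a_2,\ldots,a_{n-2}-a_{n-1}$ vanish, so the leading principal minor of order $k$ is nonzero for every $k \leq n-1$, and the claimed uniqueness follows.

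If the standard fact is not to be taken for granted, I would verify it in-line by induction on the column index. Comparing the $(1,1)$ entry of two candidate decompositions fixes $d_1 = a_1 \neq 0$, and comparing column~$1$ below the diagonal pins down $l_{i,1} = a_i / a_1$. Assuming columns $1,\ldots,k-1$ of $\bm{L}$ and diagonal entries $d_1,\ldots,d_{k-1}$ are already uniquely determined and all of $d_1,\ldots,d_{k-1}$ are nonzero, equating off-diagonal entries in column~$k$ produces a triangular system with pivots $d_1,\ldots,d_{k-1}$, uniquely solvable for $l_{k+1,k},\ldots,l_{n,k}$; the diagonal equation then determines $d_k$. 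The point I expect to require the most care, and which accounts for why the hypothesis only asks $a_i \neq 0$ and $a_{j-1}\neq a_j$ up to $n-1$, is the treatment of the last column: we do not assume $a_n \neq 0$ nor $a_{n-1} \neq a_n$, so $d_n$ may vanish; but this causes no ambiguity because $d_n$ only multiplies $l_{n,n}^2 = 1$ inside $\bm{L}\bm{D}\bm{L}^{\rm T}$, leaving no free parameter to choose. Combined with the fact that $d_1,\ldots,d_{n-1} \neq 0$ are precisely what the induction needs at each earlier step, this completes uniqueness.
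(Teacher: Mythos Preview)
Your proposal is correct and follows essentially the same route as the paper: existence from Theorem~\ref{theorem:LDL}, then uniqueness by checking that the first $n-1$ leading principal submatrices are invertible (each being L-banded with determinant $a_k\prod_{i=1}^{k-1}(a_i-a_{i+1})$ by Lemma~\ref{lemma:det}) and invoking the standard uniqueness criterion for LDL factorizations. The paper simply cites Golub--Van Loan for that criterion, whereas you also sketch a self-contained inductive verification; this is a harmless elaboration, not a different strategy.
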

\begin{proof}
For $k \in \{1, \cdots\!, n\!-\!1\}$, the $k$-th leading principal submatrix of $\bm{A}$ is the matrix $\bm{A}$ with the last $n\!-\!k$ rows and columns removed, which is an L-banded matrix with $[a_1, \cdots\!, a_k]$. If $a_i \neq 0, \forall i \in \{1, \cdots\!, n\!-\!1\}$ and $a_{j-1} \neq a_j, \forall j \in \{2, \cdots\!, n\!-\!1\}$, all the leading principal submatrices of $\bm{A}$ are invertible. Thus, from Theorem 4.1.3 in \cite{golub2013matrix}, we can show that the LDL decomposition is unique.
\end{proof}

\begin{remark}
Suppose that $a_n \neq 0$. The sufficient condition of ``there exists a unique LDL decomposition of $\bm{A}$'' in Proposition \ref{proposition:unLDL} is also the necessary condition. 
\end{remark}

The Cholesky decomposition of a real symmetric positive definite matrix has the form $\bm{L}\bm{L}^{\rm T}$, where $\bm{L}$ is a lower triangular matrix.  

\begin{theorem}[Cholesky decomposition]\label{theorem:Cholesky}
Suppose that $\bm{A}$ is positive definite. The Cholesky decomposition of $\bm{A}$ is
\BS
\begin{align}
    \bm{A} = \hat{\bm{L}}\hat{\bm{L}}^{\rm T},
\end{align}
where $\hat{\bm{L}} \equiv [\hat{l}_{i,j}]_{n \times n}$ is a lower triangular matrix as
\begin{align}
    \hat{l}_{i,j} = \frac{a_i}{a_j}\sqrt{d_j}, & \quad i \geq j.
\end{align}
Recall
\begin{align}
    d_j = 
    \begin{dcases}
        a_1, & \quad j = 1 \\
        \frac{a_j}{a_{j-1}}(a_{j-1}-a_j), & \quad 1 < j \leq n
    \end{dcases}
    \nonumber
\end{align}
defined in (\ref{Eqn:d}).
That is, $\hat{\bm{L}}$ can be written as
\begin{align}
    \hat{\bm{L}} = 
    \begin{bmatrix}
        \sqrt{a_1} &  &  &  & \\
        \frac{a_2}{\sqrt{a_1}} & \sqrt{d_2} &  & \text{\Large{0}} & \\
        \frac{a_3}{\sqrt{a_1}} & \frac{a_3\sqrt{d_2}}{a_2} & \ddots &  & \\
        \vdots & \vdots & \vdots & \sqrt{d_{n-1}} & \\
        \frac{a_n}{\sqrt{a_1}} & \frac{a_n\sqrt{d_2}}{a_2} & \cdots & \frac{a_n\sqrt{d_{n-1}}}{a_{n-1}} & \sqrt{d_n}
    \end{bmatrix}.
\end{align}

\ES
\end{theorem}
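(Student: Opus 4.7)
The plan is to bootstrap the Cholesky decomposition directly from the LDL decomposition already established in Theorem \ref{theorem:LDL}, using Theorem \ref{theorem:def} to guarantee positivity of the diagonal entries of $\bm{D}$.

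First, I would invoke Theorem \ref{theorem:def} (positive definite case) to conclude $a_1 > a_2 > \cdots > a_n > 0$. In particular, $a_k \neq 0$ for all $k \in \{1, \ldots, n-1\}$, so the hypothesis of Theorem \ref{theorem:LDL} is met and we obtain $\bm{A} = \bm{L}\bm{D}\bm{L}^{\rm T}$ with $\bm{L}$ and $\bm{D}$ as in that theorem. I would then verify that each diagonal entry $d_k$ of $\bm{D}$ is strictly positive: $d_1 = a_1 > 0$, and for $k \geq 2$, $d_k = \tfrac{a_k}{a_{k-1}}(a_{k-1}-a_k) > 0$ because $a_{k-1} > a_k > 0$ yields both factors positive. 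Hence $\bm{D}^{1/2} := \mathrm{diag}(\sqrt{d_1}, \ldots, \sqrt{d_n})$ is well-defined and real.

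Next, I would set $\hat{\bm{L}} := \bm{L}\bm{D}^{1/2}$. Since $\bm{L}$ is unit lower triangular and $\bm{D}^{1/2}$ is diagonal, $\hat{\bm{L}}$ is lower triangular, and
\begin{align}
    \hat{\bm{L}}\hat{\bm{L}}^{\rm T} = \bm{L}\bm{D}^{1/2}(\bm{D}^{1/2})^{\rm T}\bm{L}^{\rm T} = \bm{L}\bm{D}\bm{L}^{\rm T} = \bm{A}.
\end{align}
The entry formula follows immediately from Theorem \ref{theorem:LDL}: for $i > j$,
\begin{align}
    \hat{l}_{i,j} = l_{i,j}\sqrt{d_j} = \frac{a_i}{a_j}\sqrt{d_j},
\end{align}
and for $i = j$, $\hat{l}_{j,j} = \sqrt{d_j} = \frac{a_j}{a_j}\sqrt{d_j}$, which agrees with the stated expression. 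Substituting $d_1 = a_1$ and $d_j = \tfrac{a_j}{a_{j-1}}(a_{j-1}-a_j)$ recovers the explicit matrix form displayed in the theorem; in particular $\hat{l}_{i,1} = a_i/\sqrt{a_1}$.

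There is essentially no hard step here: the entire proof is a one-line consequence of the LDL decomposition together with the fact (from Theorem \ref{theorem:def}) that positive definiteness forces strictly decreasing positive $a_k$'s, which in turn makes every $d_k$ positive. The only subtlety worth remarking on is uniqueness, which if desired can be added by citing the standard fact that the Cholesky factor of a positive definite matrix with positive diagonal entries is unique; this follows immediately from the uniqueness of the LDL decomposition under the hypotheses of Proposition \ref{proposition:unLDL}, which are automatically satisfied when $a_1 > a_2 > \cdots > a_n > 0$.
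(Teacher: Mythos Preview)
Your proposal is correct and follows essentially the same approach as the paper: invoke Theorem \ref{theorem:def} to get $a_1 > \cdots > a_n > 0$, apply Theorem \ref{theorem:LDL} to obtain $\bm{A} = \bm{L}\bm{D}\bm{L}^{\rm T}$ with positive diagonal entries $d_k$, and set $\hat{\bm{L}} = \bm{L}\bm{D}^{1/2}$. Your version is slightly more detailed (you spell out why each $d_k > 0$ and verify the entry formula explicitly), but the argument is the same.
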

\begin{proof}
    We have $a_1 > \cdots > a_n > 0$ since $\bm{A}$ is positive definite by Theorem \ref{theorem:def}. Then, from Theorem \ref{theorem:LDL}, the LDL decomposition of $\bm{A}$ is  
    \BS
    \begin{align}
        \bm{A} = \bm{L}\bm{D}\bm{L}^{\rm T}. \label{Eqn:LDL}
    \end{align}
    Note that the diagonal entries of $\bm{D}$ are all positive since $a_1 > \cdots > a_n > 0$. We can rewrite (\ref{Eqn:LDL}) as
    \begin{align}
        \bm{A} &= \bm{L}\bm{D}^{1/2}\bm{D}^{1/2}\bm{L}^{\rm T} \\
        &= \bm{L}\bm{D}^{1/2}\big(\bm{L}\bm{D}^{1/2}\big)^{\rm T}.
    \end{align}
    \ES
    Let $\hat{\bm{L}} = \bm{L}\bm{D}^{1/2}$, it is easy to prove the theorem.
\end{proof}

\subsection{Minors, Cofactors and Column Substitution}\label{section:minor}
 For $i,j \in \{1, \cdots\!, n\}$, let $M_{i,j}(\bm{A})$ be the $(i,j)$-th minor of $\bm{A}$, $C_{i,j}(\bm{A})$ be the $(i,j)$-th cofactor of $\bm{A}$, and $\bm{C}(\bm{A}) \equiv [C_{i,j}(\bm{A})]_{n \times n}$ be the cofactor matrix of $\bm{A}$.
\begin{theorem}[Minors and Cofactors]
    Suppose that $\bm{A}$ is invertible and $n \geq 2$. Then, the cofactors and minors of $\bm{A}$ are
    \BS
    \begin{align}
        C_{i,j}(\bm{A}) &= 
        \begin{cases}
            (\delta_{i-1}+\delta_i)|\bm{A}|, & \quad i = j \\[1mm]
            -\delta_{\min(i,j)}|\bm{A}|, & \quad |i-j| = 1 \\[1mm]
            0,  & \quad |i-j| > 1
        \end{cases}, \\[1mm]
        M_{i,j}(\bm{A}) &= (-1)^{i+j} C_{i,j}(\bm{A}).
    \end{align}
    \ES
    Recall $\delta_k = (a_k - a_{k+1})^{-1}$ for $k \in \{1, \cdots\!, n\!-\!1\}$, $\delta_0 = 0$ and $\delta_n = a_n^{-1}$ defined in Lemma \ref{lemma:inv}.
\end{theorem}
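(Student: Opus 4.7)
The plan is to piggy-back on the inverse formula already recorded in Lemma \ref{lemma:inv}, exploiting the standard adjugate identity together with the symmetry of $\bm{A}$. Concretely, since $\bm{A}$ is invertible the classical relation $\bm{A}^{-1}=\frac{1}{|\bm{A}|}\operatorname{adj}(\bm{A})$ gives
\begin{align}
b_{i,j}\;=\;(\bm{A}^{-1})_{i,j}\;=\;\frac{1}{|\bm{A}|}\,C_{j,i}(\bm{A}),
\end{align}
so $C_{j,i}(\bm{A})=|\bm{A}|\,b_{i,j}$. Because $\bm{A}$ is L-banded it is symmetric, hence $\bm{A}^{-1}$ is symmetric ($b_{i,j}=b_{j,i}$) and the indices may be swapped freely, yielding $C_{i,j}(\bm{A})=|\bm{A}|\,b_{i,j}$ for every $i,j$.

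Next I would simply substitute the three cases of Lemma \ref{lemma:inv} into that identity. The diagonal entries $b_{i,i}=\delta_{i-1}+\delta_i$ give $C_{i,i}(\bm{A})=(\delta_{i-1}+\delta_i)|\bm{A}|$; the first off-diagonals $b_{i,j}=-\delta_{\min(i,j)}$ give $C_{i,j}(\bm{A})=-\delta_{\min(i,j)}|\bm{A}|$ when $|i-j|=1$; and $b_{i,j}=0$ for $|i-j|>1$ immediately gives the vanishing claim. The assumption $n\geq 2$ is needed only so that $\bm{A}^{-1}$ genuinely has off-diagonal structure and the indices $\delta_{i-1},\delta_i$ are all well defined.

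Finally, the minor formula follows from the defining relation $C_{i,j}(\bm{A})=(-1)^{i+j}M_{i,j}(\bm{A})$ by multiplying both sides by $(-1)^{i+j}$ and using $(-1)^{2(i+j)}=1$.

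There is really no hard step: the only thing to be careful about is the transpose in the adjugate formula, which is neutralised by the symmetry of $\bm{A}$; without that observation one would otherwise obtain the transposed version of the claim. So the proof reduces to invoking Lemma \ref{lemma:inv} and the adjugate identity, with symmetry of $\bm{A}$ as the bridge.
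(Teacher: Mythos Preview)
Your proposal is correct and follows essentially the same approach as the paper: both proofs invoke the adjugate identity $\operatorname{adj}(\bm{A})=|\bm{A}|\,\bm{A}^{-1}$, use the symmetry of $\bm{A}$ to identify $\bm{C}(\bm{A})$ with $\operatorname{adj}(\bm{A})$, and then read off the entries from the inverse formula of Lemma~\ref{lemma:inv}. Your write-up is slightly more explicit about the transpose issue and the case-by-case substitution, but the underlying argument is identical.
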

\begin{proof}
    Let ${\rm a}{\rm d}{\rm j}(\bm{A})$ be the adjugate matrix \cite{strang2006linear} of $\bm{A}$. Since $\bm{A}$ is symmetric,
    \begin{align}
        {\rm a}{\rm d}{\rm j}(\bm{A}) \equiv \bm{C}(\bm{A})^{\rm T} = \bm{C}(\bm{A}).
    \end{align}
    Since $\bm{A}$ is invertible, 
    \begin{align}
        {\rm a}{\rm d}{\rm j}(\bm{A}) = |\bm{A}|\bm{A}^{-1}.
    \end{align}
    Then, from Lemma \ref{lemma:inv}, we can prove the theorem.  
\end{proof}

 For $k \in \{1, \cdots\!, n\}$, let $\bm{A}\overset{k}{\leftarrow}\bm{b}$ denote the matrix formed by replacing column $k$ of $\bm{A}$ with $\bm{b}$.
\begin{theorem}[Determinant after Column Substitution]
    Suppose that $\bm{A}$ is invertible. Let $\bm{b} = [b_1, \cdots\!, b_n]^{\rm T} \in \mathbb{R}^n$. The determinant of $\bm{A}\overset{k}{\leftarrow}\bm{b}$ is given by
    \BS
    \begin{align}
        |\bm{A}\overset{k}{\leftarrow}\bm{b}| = g_k |\bm{A}|,
    \end{align}
    where
    \begin{align}
        g_k =
        \begin{cases}
            \delta_1 (b_1 - b_2), & \ k = 1 \\[1mm]
            \delta_{k-1}(b_k - b_{k-1}) + \delta_k(b_k - b_{k+1}), & \ 1 < k < n \\[1mm]
            \delta_{n-1}(b_n - b_{n-1}) + \delta_n b_n & \ k = n
        \end{cases}.
    \end{align}
    Recall $\delta_k = (a_k - a_{k+1})^{-1}$ for $k \in \{1, \cdots\!, n\!-\!1\}$ and $\delta_n = a_n^{-1}$ defined in Lemma \ref{lemma:inv}.
    \ES
\end{theorem}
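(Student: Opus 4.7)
The most direct route is Cramer's rule. For an invertible $\bm{A}$, the unique solution of the linear system $\bm{A}\bm{x}=\bm{b}$ satisfies $x_k=|\bm{A}\overset{k}{\leftarrow}\bm{b}|/|\bm{A}|$, so it suffices to compute $\bm{x}=\bm{A}^{-1}\bm{b}$ entrywise and read off $g_k=x_k$. This reduces the theorem to a trivial matrix-vector product, since Lemma \ref{lemma:inv} tells us that $\bm{A}^{-1}$ is tridiagonal with explicit entries $\delta_{i-1}+\delta_i$ on the diagonal and $-\delta_{\min(i,j)}$ on the off-diagonals.

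Concretely, the plan is: first multiply out the $k$-th row of $\bm{A}^{-1}$ against $\bm{b}$, splitting into the three cases $k=1$, $1<k<n$, $k=n$ dictated by the tridiagonal structure (so that only at most three entries of $\bm{b}$ contribute). For $1<k<n$ one gets $x_k=-\delta_{k-1}b_{k-1}+(\delta_{k-1}+\delta_k)b_k-\delta_k b_{k+1}$, which rearranges immediately to $\delta_{k-1}(b_k-b_{k-1})+\delta_k(b_k-b_{k+1})$. The boundary case $k=1$ uses the convention $\delta_0=0$ and collapses to $\delta_1(b_1-b_2)$; the case $k=n$ uses $\delta_n=a_n^{-1}$ and yields $\delta_{n-1}(b_n-b_{n-1})+\delta_n b_n$. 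Finally, Cramer's rule $|\bm{A}\overset{k}{\leftarrow}\bm{b}|=x_k|\bm{A}|$ gives exactly $g_k|\bm{A}|$.

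There is essentially no obstacle here, as both Lemma \ref{lemma:inv} (which provides the closed-form inverse) and invertibility of $\bm{A}$ are given. The only mild bookkeeping is to verify that the proposed $g_k$ formulas match the three cases of the tridiagonal multiplication under the conventions $\delta_0=0$ and $\delta_n=a_n^{-1}$. An alternative, should a self-contained proof avoiding Lemma \ref{lemma:inv} be preferred, is to expand $|\bm{A}\overset{k}{\leftarrow}\bm{b}|$ by cofactors along column $k$, obtaining $|\bm{A}\overset{k}{\leftarrow}\bm{b}|=\sum_{i=1}^n b_i C_{i,k}(\bm{A})$, and then invoke the preceding theorem on cofactors, which again leaves only three nonzero terms (those with $|i-k|\le 1$) and produces the same three-case expression for $g_k$ after factoring $|\bm{A}|$. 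Either route is short; I would present the Cramer's rule derivation as the main proof because it is a single one-line calculation once Lemma \ref{lemma:inv} is cited.
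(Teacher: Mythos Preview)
Your proposal is correct and essentially the same as the paper's proof: the paper writes the Laplace expansion $|\bm{A}\overset{k}{\leftarrow}\bm{b}|=\sum_i b_i C_{i,k}(\bm{A})$, recognizes this as the $k$-th entry of $\mathrm{adj}(\bm{A})\bm{b}=|\bm{A}|\bm{A}^{-1}\bm{b}$, and then reads off $g_k$ from the tridiagonal $\bm{A}^{-1}$ of Lemma~\ref{lemma:inv}. This is exactly Cramer's rule followed by the same row-of-$\bm{A}^{-1}$ times $\bm{b}$ computation you describe, so the two arguments coincide.
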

\begin{proof}
    The Laplace expansion \cite{golub2013matrix} along the column $k$ of $|\bm{A}\overset{k}{\leftarrow}\bm{b}|$ is
    \begin{align}
        |\bm{A}\overset{k}{\leftarrow}\bm{b}| = \sum_{i=1}^n b_i C_{i,k}(\bm{A}).
    \end{align}
    Since ${\rm a}{\rm d}{\rm j}(\bm{A}) \equiv \bm{C}(\bm{A})^{\rm T}$,
    \begin{align}
        \big[|\bm{A}\overset{1}{\leftarrow}\bm{b}|, \cdots\!, |\bm{A}\overset{n}{\leftarrow}\bm{b}|\big]^{\rm T} = {\rm a}{\rm d}{\rm j}(\bm{A})\bm{b} = |\bm{A}|\bm{A}^{-1}\bm{b}.
    \end{align}
    Let $\bm{r}_k^{\rm T}$ be the $k$-th row of $\bm{A}^{-1}$. From Lemma \ref{lemma:inv},
    \BS
    \begin{align}
        \bm{r}_k^{\rm T} =
        \begin{cases}
            \big[\delta_1, -\delta_1, 0, \cdots\big], & \ k = 1 \\[1mm]
            \big[\underbrace{\cdots\!, 0}_{k-2}, -\delta_{i-1}, \delta_{i-1}\!+\!\delta_i, -\delta_i, 0, \cdots\big], & \ 1 < k < n \\
            \big[\cdots\!, 0, -\delta_{n-1}, \delta_{n-1}\!+\!\delta_n\big], & \ k = n
        \end{cases}
    \end{align}
    The $k$-th entry of $\bm{A}^{-1}\bm{b}$ is
    \begin{align}
        g_k = \bm{r}_k^{\rm T}\bm{b},
    \end{align}
    and then we can easily prove the theorem.
    \ES
\end{proof}

\subsection{Characteristic Polynomial}\label{section:poly}
\begin{theorem}[Characteristic Polynomial]
    Suppose that $\bm{A}$ is invertible. The characteristic polynomial of $\bm{A}$ is
    \BS
    \begin{align}
        p_A(\lambda) = |\bm{A}|f_n.
    \end{align}
     where $f_n$ can be obtained by a three-term recurrence: $f_0=1$, $f_1 = \delta_1\lambda-1$, and for $k \in \{2, \cdots\!, n\}$,
    \begin{align}\label{Eqn:rec}
        f_k = \big((\delta_{k-1}+\delta_k)\lambda-1\big)f_{k-1} - \delta_{k-1}^2\lambda^2 f_{k-2}.
    \end{align}
    Recall $\delta_k = (a_k - a_{k+1})^{-1}$ for $k \in \{1, \cdots\!, n\!-\!1\}$ and $\delta_n = a_n^{-1}$ defined in Lemma \ref{lemma:inv}. In other words, the eigenvalues of $\bm{A}$ are the roots of $f_n$.
    \ES
\end{theorem}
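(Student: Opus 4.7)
The plan is to relate the characteristic polynomial to the determinant of a tridiagonal matrix, then invoke the classical three-term recurrence for such determinants. Specifically, since $\bm{A}$ is invertible, I would write
\begin{align}
p_A(\lambda) = |\lambda\bm{I}-\bm{A}| = |\bm{A}|\cdot|\bm{A}^{-1}(\lambda\bm{I}-\bm{A})| = |\bm{A}|\cdot|\lambda\bm{A}^{-1}-\bm{I}|,
\end{align}
so it suffices to prove that $f_n = |\lambda\bm{A}^{-1}-\bm{I}|$ and that $f_k$, defined as the determinant of the $k\times k$ leading principal submatrix of $\lambda\bm{A}^{-1}-\bm{I}$, obeys the stated recurrence.

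Next, I would invoke Lemma \ref{lemma:inv}: $\bm{A}^{-1}$ is tridiagonal, with diagonal entries $\delta_{i-1}+\delta_i$ and sub-/super-diagonal entries $-\delta_{\min(i,j)}$. Consequently $\lambda\bm{A}^{-1}-\bm{I}$ is tridiagonal with diagonal entries $\alpha_i = (\delta_{i-1}+\delta_i)\lambda-1$ and off-diagonal entries $\beta_i = -\delta_i\lambda$. I would then note that the $k\times k$ leading principal submatrix of $\lambda\bm{A}^{-1}-\bm{I}$ has exactly the diagonal entries $\alpha_1,\dots,\alpha_k$ (where $\alpha_k$ uses $\delta_k$ defined via $(a_k-a_{k+1})^{-1}$ for $k<n$ and $\delta_n = a_n^{-1}$ for $k=n$, matching the definitions in Lemma \ref{lemma:inv}) and off-diagonal entries $\beta_1,\dots,\beta_{k-1}$.

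From here I would apply the standard three-term recurrence for determinants of tridiagonal matrices, expanding along the last row/column of the $k\times k$ submatrix:
\begin{align}
f_k = \alpha_k f_{k-1} - \beta_{k-1}^2 f_{k-2} = \bigl((\delta_{k-1}+\delta_k)\lambda-1\bigr)f_{k-1} - \delta_{k-1}^2\lambda^2 f_{k-2},
\end{align}
with base cases $f_0=1$ (empty determinant) and $f_1 = \alpha_1 = \delta_1\lambda-1$ (since $\delta_0=0$). Combining with the earlier identity then yields $p_A(\lambda)=|\bm{A}|f_n$, and the eigenvalues are therefore the roots of $f_n$ (since $|\bm{A}|\neq 0$).

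I do not anticipate a serious obstacle: once the identity $p_A(\lambda) = |\bm{A}|\cdot|\lambda\bm{A}^{-1}-\bm{I}|$ is written down, everything reduces to bookkeeping on a tridiagonal matrix that Lemma \ref{lemma:inv} already gives us in closed form. The only mildly delicate point is checking that the index conventions for $\delta_k$ at the boundary ($k=n$) align correctly between the interior entries of $\bm{A}^{-1}$ and the $(n,n)$-entry $\delta_{n-1}+\delta_n$, so that the same recurrence formula applies uniformly for all $k\in\{2,\dots,n\}$.
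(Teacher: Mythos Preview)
Your proposal is correct and follows essentially the same route as the paper: factor $p_A(\lambda)=|\bm{A}|\cdot|\lambda\bm{A}^{-1}-\bm{I}|$, use Lemma~\ref{lemma:inv} to see that $\lambda\bm{A}^{-1}-\bm{I}$ is tridiagonal with the stated entries, and then apply the standard three-term recurrence for leading principal minors of a tridiagonal matrix (the paper cites a reference for this recurrence, while you derive it by expanding along the last row). The base cases and the boundary convention $\delta_0=0$, $\delta_n=a_n^{-1}$ match as you note.
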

\begin{proof}
    Since $\bm{A}$ is invertible,
    \BS
    \begin{align}
        p_A(\lambda) &= |\lambda\bm{I}-\bm{A}| \\
        &= |\bm{A}(\lambda\bm{A}^{-1}-\bm{I})| \\
        &= |\bm{A}|\cdot|\lambda\bm{A}^{-1}-\bm{I}|.
    \end{align}
    \ES
    Note that $\lambda\bm{A}^{-1}-\bm{I}$ is a tridiagonal matrix with entries
    \begin{align}
        c_{i, j} = 
        \begin{cases}
        (\delta_{i-1} + \delta_i)\lambda-1, & \quad i = j \\[1mm]
        -\delta_{\min(i,j)}\lambda, & \quad |i-j| = 1 \\[1mm]
        0, & \quad |i-j| > 1
        \end{cases}.
    \end{align}
    For $k \in \{1, \cdots\!, n\}$, let $f_k$ be the $k$-th leading principal minor of $\lambda\bm{A}^{-1}-\bm{I}$. It is easy to verify that $f_1 = \delta_1\lambda-1$. Then, we can obtain the recurrence in (\ref{Eqn:rec}) by Theorem 2.1 in \cite{el2004inverse}. Since $f_n = |\lambda\bm{A}^{-1}-\bm{I}|$, we have proved the theorem.
\end{proof}

\subsection{Other properties}\label{section:other}
\begin{theorem}
\BS
For any $[h_1, \cdots\!, h_n]^{\rm T} \in \mathbb{R}^n$, let the upper triangular matrix $\bm{H} \equiv [h_{i,j}]_{n \times n}$ be
\begin{align}
    h_{i, j} = 
    \begin{cases}
        h_j & \quad i < j \\[1mm]
        \sum_{k=1}^j h_k & \quad i=j
    \end{cases},
\end{align}
i.e.
\begin{align}
    \bm{H} = 
    \begin{bmatrix}
        h_1 & h_2 & \cdots & h_n \\
        & \sum_{k=1}^2 h_k & \vdots & \vdots \\
        &  & \ddots & h_n \\ 
        & \text{\Large{0}} &  & \sum_{k=1}^n h_k
    \end{bmatrix},
\end{align}
then $\bm{Q} = \bm{H}\bm{A}$ is an L-banded matrix with $[q_1, \cdots\!, q_n]$, where for $t \in \{1, \cdots\!, n\}$,
\begin{align}
    q_t = a_t\sum_{k=1}^t h_k + \sum_{k=i+1}^n h_k a_k.
\end{align}
\ES
\end{theorem}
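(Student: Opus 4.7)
The plan is to compute the entry $(\bm{H}\bm{A})_{i,j}$ directly from the definitions of $\bm{H}$ and $\bm{A}$, and then to show that the resulting expression depends on the pair $(i,j)$ only through $t=\max(i,j)$, matching the claimed $q_t$. By Definition~\ref{Def:L}, this is exactly what is required for $\bm{Q}$ to be L-banded with parameters $[q_1,\cdots\!,q_n]$.

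First, I would exploit the upper triangular structure of $\bm{H}$. Since $h_{i,k}=0$ for $k<i$, $h_{i,i}=\sum_{m=1}^{i}h_m$, and $h_{i,k}=h_k$ for $k>i$, the row-column product collapses to
\begin{equation}
(\bm{H}\bm{A})_{i,j} \;=\; \Big(\sum_{m=1}^{i} h_m\Big)\, a_{i,j} \;+\; \sum_{k=i+1}^{n} h_k\, a_{k,j}.
\end{equation}
At this point the L-banded rule $a_{k,j}=a_{\max(k,j)}$ needs to be applied, and I would split into the two cases $j\le i$ and $j>i$.

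If $j\le i$, every index $k\ge i$ appearing above satisfies $k\ge j$, so $a_{i,j}=a_i$ and $a_{k,j}=a_k$ for $k>i$. The entry then collapses to $a_i\sum_{m=1}^{i}h_m+\sum_{k=i+1}^{n}h_k a_k$, which is precisely $q_i=q_{\max(i,j)}$. The case $i<j$ takes slightly more care: I would split the tail sum at $k=j$, using $a_{k,j}=a_j$ for $i<k\le j$ and $a_{k,j}=a_k$ for $k>j$. Collecting the coefficients of $a_j$ telescopes $\sum_{m=1}^{i}h_m+\sum_{k=i+1}^{j}h_k$ into $\sum_{m=1}^{j}h_m$, so the entry equals $a_j\sum_{m=1}^{j}h_m+\sum_{k=j+1}^{n}h_k a_k=q_j=q_{\max(i,j)}$.

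The main obstacle, such as it is, is simply the bookkeeping in the case $i<j$; no structural lemma beyond Definition~\ref{Def:L} and the zero/one pattern of $\bm{H}$ is needed. Since both cases give $(\bm{H}\bm{A})_{i,j}=q_{\max(i,j)}$ with the same sequence $(q_t)$, the proof concludes by appealing once more to Definition~\ref{Def:L} to identify $\bm{Q}$ as the L-banded matrix with the stated parameters.
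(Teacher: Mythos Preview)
Your proof is correct and follows essentially the same route as the paper's: both compute $(\bm{H}\bm{A})_{i,j}$ as the dot product of the $i$-th row of $\bm{H}$ with the $j$-th column of $\bm{A}$ and use the explicit structure of each to arrive at $q_{\max(i,j)}$. The only difference is cosmetic---the paper writes down $\bm{h}_i^{\rm T}$ and $\bm{a}_j$ and asserts the product equals $q_t$ in one line, whereas you spell out the two cases $j\le i$ and $j>i$ (which in fact clarifies the step the paper glosses over and avoids its apparent index typo $\sum_{k=i+1}^n$ in place of $\sum_{k=t+1}^n$).
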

\begin{proof}
For $i,j \in \{1, \cdots\!, n\}$, let $\bm{h}_i^{\rm T}$ denote the $i$-th row of $\bm{H}$ and $\bm{a}_j$ denote the $j$-th column of $\bm{A}$. We have
\BS
\begin{align}
    \bm{h}_i^{\rm T} &= \big[\underbrace{0, \cdots\!, 0}_{i-1}, \textstyle\sum_{k=1}^i h_k, h_{i+1}, \cdots\!, h_n\big], \\
    \bm{a}_j &= \big[\underbrace{a_j, \cdots\!, a_j}_{j}, a_{j+1}, \cdots\!, a_n\big]^{\rm T}.
\end{align}
Let $t = \max(i, j)$. The $(i, j)$-th entry of $\bm{Q}$ is 
\begin{align}
    q_{i, j} &= \bm{h}_i^{\rm T} \bm{a}_j \\
             &= a_t\sum_{k=1}^t h_k + \sum_{k=i+1}^n h_k a_k = q_t.
\end{align}
\ES
Thus, $\bm{Q}$ is an L-banded matrix with $[q_1, \cdots\!, q_n]$.
\end{proof}

\begin{proposition}
Let $\bm{A}^2 \equiv [b_{i,j}]_{n \times n}$, $t = \max(i, j)$ and $m = \min(i, j)$. Then, 
\begin{align}
    b_{i, j} &= a_t\Big(m a_m + \sum_{k=m+1}^{t}a_k\Big) + \sum_{k=t+1}^n a_k^2.
\end{align}
\end{proposition}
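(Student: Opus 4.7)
The plan is to compute $b_{i,j}$ directly from the definition of matrix multiplication, $b_{i,j} = \sum_{k=1}^n a_{i,k} a_{k,j}$, and exploit the L-banded identity $a_{i,k} = a_{\max(i,k)}$ to split the sum into three ranges determined by the positions of $m = \min(i,j)$ and $t = \max(i,j)$.

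Without loss of generality I would take $i \leq j$, so that $m = i$ and $t = j$; the expression to be proved is symmetric in $(i,j)$, so this costs nothing. I would then partition the summation index $k \in \{1,\dots,n\}$ into $k \leq m$, $m < k \leq t$, and $k > t$, and in each range read off $\max(i,k)$ and $\max(k,j)$. On the first range both maxima collapse to $i$ and $j$ respectively, so the summand is constantly $a_m a_t$, giving a contribution of $m\,a_m a_t$. On the middle range $\max(i,k)=k$ while $\max(k,j)=t$, giving $a_t \sum_{k=m+1}^{t} a_k$. On the last range both maxima equal $k$, yielding $\sum_{k=t+1}^{n} a_k^2$. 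Adding these three contributions and factoring $a_t$ from the first two gives exactly the claimed formula.

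I do not anticipate any real obstacle; the only care required is bookkeeping at the boundary indices to make sure no term is double-counted or omitted (in particular, the $k=m$ term is absorbed into the first bucket and the $k=t$ term into the second). The argument does not require any of the earlier algebraic results about $\bm{A}$ — it is a direct unfolding of the L-banded pattern.
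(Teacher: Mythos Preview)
Your proposal is correct and follows essentially the same approach as the paper: the paper also computes $b_{i,j}$ as the inner product of the $i$-th and $j$-th rows of $\bm{A}$ and splits the sum into the same three ranges $k\le m$, $m<k\le t$, and $k>t$, arriving at $m a_m a_t + a_t\sum_{k=m+1}^{t}a_k + \sum_{k=t+1}^{n}a_k^2$. Your explicit use of the identity $a_{i,k}=a_{\max(i,k)}$ is just a notational repackaging of the paper's observation about the row pattern $[a_i,\dots,a_i,a_{i+1},\dots,a_n]$.
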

\begin{proof}
For $i,j \in \{1, \cdots\!, n\}$, let $\bm{a}_i^{\rm T}$ be the $i$-th row of $\bm{A}$.
\BS
\begin{align}
    \bm{a}_i^{\rm T} &= [a_i, \cdots\!, a_i, a_{i+1}, \cdots\!, a_n]. \\
    b_{i, j} &= \bm{a}_i^{\rm T} \bm{a}_j \\
             &= m a_m a_t + \sum_{k=m+1}^{t}a_k a_t + \sum_{k=t+1}^n a_k^2.
\end{align}
\ES
Thus, we have proved the proposition.
\end{proof}

\begin{proposition}\label{pro:2}
The linear combination of L-banded matrices is an L-banded matrix. 
\end{proposition}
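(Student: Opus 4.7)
The plan is to prove this directly from Definition \ref{Def:L} by unpacking what ``L-banded'' means entry-wise and showing the defining identity $a_{i,j} = a_{\max(i,j),\max(i,j)}$ is preserved under scaling and addition.

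First, I would fix $m$ L-banded matrices $\bm{A}^{(1)}, \ldots, \bm{A}^{(m)} \in \mathbb{R}^{n \times n}$, where $\bm{A}^{(\ell)}$ is characterized by the sequence $[a_1^{(\ell)}, \ldots, a_n^{(\ell)}]$ in the sense of Definition \ref{Def:L}, together with scalars $\alpha_1, \ldots, \alpha_m \in \mathbb{R}$. Set $\bm{S} = \sum_{\ell=1}^m \alpha_\ell \bm{A}^{(\ell)}$ with entries $s_{i,j}$.

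Next, for arbitrary $i, j \in \{1, \ldots, n\}$ with $t = \max(i,j)$, I would compute
\begin{align}
    s_{i,j} = \sum_{\ell=1}^m \alpha_\ell\, a_{i,j}^{(\ell)} = \sum_{\ell=1}^m \alpha_\ell\, a_t^{(\ell)},
\end{align}
where the second equality uses the L-banded property of each $\bm{A}^{(\ell)}$. Defining $s_t := \sum_{\ell=1}^m \alpha_\ell\, a_t^{(\ell)}$ for $t \in \{1, \ldots, n\}$, this gives $s_{i,j} = s_{\max(i,j)}$, which is exactly the condition in (\ref{Eqn:L-banded}). Hence $\bm{S}$ is an L-banded matrix with sequence $[s_1, \ldots, s_n]$.

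There is essentially no obstacle here; the statement is immediate from the fact that the L-banded property is a pointwise linear constraint on the $(i,j)$-entry (namely $a_{i,j} - a_{\max(i,j),\max(i,j)} = 0$), and such constraints are preserved by linear combinations. The only thing worth emphasizing in the write-up is that the defining sequence of the linear combination is itself the same linear combination of the defining sequences, so the class of L-banded matrices of a fixed size $n$ forms an $n$-dimensional linear subspace of $\mathbb{R}^{n \times n}$.
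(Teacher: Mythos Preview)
Your proof is correct and follows essentially the same approach as the paper: both argue that the L-banded condition is an entry-wise constraint and therefore preserved under linear combinations. Your write-up is simply a more explicit version of the paper's one-line observation, with the added (and accurate) remark that the defining sequence of the combination is the corresponding combination of defining sequences.
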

\begin{proof}
The linear combination of matrices can be seen element-wise. Thus, the linear combination of any number of L-banded matrices is still an L-banded matrix.
\end{proof}

\subsection{New Derivation for Determinant}\label{section:det}
For $i, j \in \{1, \cdots\!, n\}$, let $\bm{A}_{i, j}$ denote the matrix $\bm{A}$ with row $i$ and column $j$ removed. 
\begin{lemma}\label{lemma:1}
    For $n \geq 2$, $k \in \{1, \cdots\!, n\}$,
    \begin{align}
        |\bm{A}_{1, k}| = 
        \begin{cases}
            |\bm{A}_{1, 1}|,  &  \quad 1 \leq k \leq 2 \\[1mm] 
            0,  &  \quad 3 \leq k \leq n  
        \end{cases}.
    \end{align}
\end{lemma}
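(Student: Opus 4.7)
The plan is to read off $\bm{A}_{1,k}$ entrywise from the L-banded rule $a_{i,j}=a_{\max(i,j)}$ and exploit the fact that, after row~$1$ is deleted, every row index $i$ satisfies $i\geq 2$, so $\max(i,1)=\max(i,2)=i$. This single observation forces columns $1$ and $2$ of the remaining matrix to be equal whenever both survive, which will dispatch the $k\geq 3$ case immediately, and it also explains why the $k=1$ and $k=2$ minors coincide.

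Concretely, for $k\geq 3$ I would note that the columns indexed by $1$ and $2$ of $\bm{A}$ are both present in $\bm{A}_{1,k}$, and after deleting row~$1$ their $i$-th entries (for $i\in\{2,\ldots,n\}$) are
\begin{equation*}
a_{\max(i,1)}=a_i=a_{\max(i,2)}.
\end{equation*}
Thus $\bm{A}_{1,k}$ has two identical columns, so $|\bm{A}_{1,k}|=0$.

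For $k=2$, I would compare $\bm{A}_{1,1}$ and $\bm{A}_{1,2}$ column by column, indexing rows by $i\in\{2,\ldots,n\}$. The first column of $\bm{A}_{1,1}$ comes from the original column $2$ and equals $(a_{\max(i,2)})_i=(a_i)_i$, while the first column of $\bm{A}_{1,2}$ comes from the original column $1$ and equals $(a_{\max(i,1)})_i=(a_i)_i$; the remaining columns (original indices $j\in\{3,\ldots,n\}$) are untouched in both matrices and therefore coincide. Hence $\bm{A}_{1,1}=\bm{A}_{1,2}$ as matrices, and in particular their determinants are equal. The case $k=1$ is tautological.

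There is no real obstacle here beyond spotting the column coincidence: once one notices that deleting row~$1$ collapses the first two columns of the L-banded pattern into the same vector $(a_2,\ldots,a_n)^{\rm T}$, the three cases are immediate and no expansion or induction is required.
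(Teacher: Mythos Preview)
Your argument is correct and essentially identical to the paper's: both observe that after deleting row~$1$, the original columns $1$ and $2$ collapse to the same vector $(a_2,\ldots,a_n)^{\rm T}$, which forces $|\bm{A}_{1,k}|=0$ for $k\geq 3$ (two equal surviving columns) and $\bm{A}_{1,1}=\bm{A}_{1,2}$ (the only differing column is replaced by an identical one). The only cosmetic difference is that you invoke the rule $a_{i,j}=a_{\max(i,j)}$ explicitly, whereas the paper just writes out the two columns.
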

\begin{proof}
    The column 1 and column 2 of $\bm{A}$ are $[a_1, \cdots\!, a_n]^{\rm T}$ and $[a_2, a_2, \cdots\!, a_n]^{\rm T}$, respectively. Then, column 1 and column 2 of $\bm{A}$ with row 1 removed are both $[a_2, \cdots\!, a_n]^{\rm T}$. Thus, $|\bm{A}_{1, k}| = 0$ for $k \geq 3$ since the column 1 and column 2 of $\bm{A}_{1, k}$ are the same. In addition, $|\bm{A}_{1, 1}| = |\bm{A}_{1, 2}|$.
\end{proof}

Then, we give a new proof of Lemma \ref{lemma:det} by mathematical induction. \\
(1) Base case: For $n = 1$, $\bm{A} = [a_1]$ so that $|\bm{A}| = a_1$. \\
(2) Induction step: For $n = p$, $\widetilde{\bm{A}}$ is a $p \times p$ L-banded matrix with $[a_1, \cdots\!, a_{p}]$ and we assume that Lemma \ref{lemma:det}  holds. For $n = p\!+\!1$, $\bm{A}$ is a $(p\!+\!1) \times (p\!+\!1)$ L-banded matrix with $[a_1, \cdots\!, a_{p+1}]$. The Laplace expansion \cite{golub2013matrix} of $\bm{A}$ is 
\BS
\begin{align}
    |\bm{A}| &= \sum_{j=1}^{p+1} (-1)^{j+1} a_j |\bm{A}_{1, j}| \\
    & = (a_1 - a_2) |\bm{A}_{1, 1}|, \label{Eqn:1}
\end{align}
\ES
where (\ref{Eqn:1}) holds because of Lemma \ref{lemma:1}. Note that $\bm{A}_{1,1}$ is a $p \times p$ L-banded matrix with $[a_2, \cdots\!, a_{p+1}]$. By the assumption for $n=p$, we have  
\begin{align}\label{Eqn:2}
    |\bm{A}_{1, 1}| = a_{p+1}\prod_{k=2}^{p}(a_k - a_{k+1}).
\end{align}
From (\ref{Eqn:1}) and (\ref{Eqn:2}),
\begin{align}
    |\bm{A}| = a_{p+1}\prod_{k=1}^{p}(a_k - a_{k+1}).
\end{align}
Thus, Lemma \ref{lemma:det} holds for $n = p\!+\!1$.

\subsection{New Derivation for Inverse}\label{section:inv}
We give a new proof of Lemma \ref{lemma:inv} by mathematical induction. \\
(1) Base case: For $n = 1$, $\bm{A} = [a_1]$ so that $\bm{A}^{-1} = [a_1^{-1}]$. \\
(2) Induction step: For $n = p$, $\widetilde{\bm{A}}$ is a $p \times p$ L-banded matrix with $[a_1, \cdots\!, a_{p}]$ and we assume that the theorem holds. For $n = p\!+\!1$, $\bm{A}$ is a $(p\!+\!1) \times (p\!+\!1)$ L-banded matrix with $[a_1, \cdots\!, a_{p+1}]$. Since $\widetilde{\bm{A}}$ and $\bm{A}$ are invertible, for $k \in \{1, \cdots\!, p\}$, $a_k \neq a_{k+1}$, $a_{p} \neq 0$ and $a_{p+1} \neq 0$. Let $\bm{1} \in \mathbb{R}^p$ be an all-ones vector. It is easy to verify that
\BS
\begin{align}
    \bm{1}^{\rm T}\widetilde{\bm{A}}^{-1} &= \big[0, \cdots\!, 0, a_p^{-1}\big], \\
    \widetilde{\bm{A}}^{-1} \bm{1} &= (\bm{1}^{\rm T}\widetilde{\bm{A}}^{-1})^{\rm T}, \\
    \bm{1}^{\rm T}\widetilde{\bm{A}}^{-1}\bm{1} &= a_p^{-1}. 
\end{align}
\ES
Then, we let
\BS
\begin{align}
    s &= \big(a_{p+1} - a_{p+1}^2 \bm{1}^{\rm T} \widetilde{\bm{A}}^{-1} \bm{1}\big)^{-1} \\
      &= a_p a_{p+1}^{-1}(a_p - a_{p+1})^{-1} \\
      &= (a_p - a_{p+1})^{-1} + a_{p+1}^{-1}.
\end{align}
\ES
Expressing the matrix inverse in block form,
\BS
\begin{align}
    &\bm{A}^{-1} = 
    \begin{bmatrix} 
        \widetilde{\bm{A}} & a_{p+1}\bm{1} \\
        a_{p+1}\bm{1}^{\rm T} & a_{p+1}
    \end{bmatrix}^{-1} \\
    &=
    \begin{bmatrix} 
        \widetilde{\bm{A}}^{-1}\! +\! a_{p+1}^2 s \widetilde{\bm{A}}^{-1} \bm{1} \bm{1}^{\rm T} \widetilde{\bm{A}}^{-1} & -a_{p+1} s \widetilde{\bm{A}}^{-1} \bm{1} \\
        -a_{p+1} s \bm{1}^{\rm T}\widetilde{\bm{A}}^{-1} & s
    \end{bmatrix}. \label{Eqn:inv_2}
\end{align}
\ES    
Note that
\BS
\label{Eqn:inv_3}
\begin{align}
    a_{p+1}^2 s \widetilde{\bm{A}}^{-1} \bm{1} \bm{1}^{\rm T} \widetilde{\bm{A}}^{-1} &= 
    \begin{bmatrix}
        \ \bm{0} & \bm{0} \\
        \ \bm{0} & (a_p\!-\!a_{p+1})^{-1}\!-\!a_p^{-1}
    \end{bmatrix}, \\
    -a_{p+1} s \bm{1}^{\rm T}\widetilde{\bm{A}}^{-1} &= \big[0, \cdots\!, 0, (a_p-a_{p+1})^{-1}\big].
\end{align}
\ES
For $k \in \{1, \cdots\!, p\}$, let $\delta_k = (a_k - a_{k+1})^{-1}$, $\delta_0 = 0$ and $\delta_{p+1} = a_{p+1}^{-1}$. Then, substitute (\ref{Eqn:inv_3}) into (\ref{Eqn:inv_2}), we can find that $\bm{A}^{-1}$ is the same as the matrix in (\ref{Eqn:inv_1}). Thus, the  holds for $n = p\!+\!1$.

\begin{subsection}{Complexity Comparison}\label{section:comp}
We give a comparison of the time complexity of some operations on L-banded matrices and those of general real symmetric matrices. 
\begin{table}[h!] 
\renewcommand{\arraystretch}{1.5} 
\centering 
\footnotesize 
\setlength{\tabcolsep}{1mm}{
\begin{tabular}{c|c|c}
\hline
 & General matrices & L-banded matrices \\
\hline
Determinant & ${\cal O}(n^3)$ & ${\cal O}(n)$  \\
\hline
Inverse & ${\cal O}(n^3)\;\star$ & ${\cal O}(n)\;\star$ \\
\hline
$\bm{x}^{\rm T}\bm{A}\bm{x}$ & ${\cal O}(n^2)$ & ${\cal O}(n)$ \\
\hline
Definiteness & ${\cal O}(n^3)$ & ${\cal O}(n)$ \\
\hline
LDL decomposition & ${\cal O}(n^3)$ & ${\cal O}(n^2)$ \\
\hline
Cholesky decomposition & ${\cal O}(n^3)\;\dag$ & ${\cal O}(n^2)\;\dag$ \\
\hline
Minors / Cofactors & ${\cal O}(n^3)$ & ${\cal O}(n)\;\star$ \\
\hline
\makecell{Determinant after a\\ column substitution} & ${\cal O}(n^3)$ & ${\cal O}(n)\;\star$ \\
\hline
Characteristic polynomial & ${\cal O}(n^3)$ & ${\cal O}(n)\;\star$ \\
\hline
\end{tabular}}
\vspace{\baselineskip}
\caption{Comparison for the complexity of operations on L-banded matrices and general symmetric matrices}
\label{table:comp}
\end{table}

In Table \ref{table:comp}, the notation ``$\star$'' and ``$\dag$'' means the corresponding matrices are invertible and positive definite, respectively. In addition, we need to point out that some methods can find determinants of general matrices with complexity between ${\cal O}(n^2)$ and ${\cal O}(n^3)$. However, the most common methods, like LU decomposition or Bareiss algorithm, are in ${\cal O}(n^3)$. 
\end{subsection}

\end{section}

\begin{section}{Conclusions}
In this paper, we gave many algebraic properties of L-banded matrices. We expect that our findings in this research will contribute to the fields of mathematics, iterative signal processing, message-passing algorithms, and other relevant applications that employ the L-banded matrices.
\end{section}

\bibliographystyle{IEEEtran}
\bibliography{reference}

\end{document}